\newcommand{\ignore}[1]{{}}
\newenvironment{proof}{\par\noindent{\bf Proof:}}{\mbox{}\hfill$\Box$\\}
\newtheorem{definition}{Definition}[section]
\newtheorem{theorem}{Theorem}[section]
\newtheorem{lemma}{Lemma}[section]
\newcommand{\concat}{$\normalfont::$}
\newcommand{\inedges}{\textsc{In-Edges}}
\newcommand{\kftrs}{\textsc{$k$-FTRS}}
\newcommand{\Out}{\textsc{\scriptsize{out}}}
\newcommand{\In}{\textsc{\scriptsize{in}}}
\newcommand{\etal}{et al.}
\begin{document}

\title{An efficient  strongly connected components algorithm\\ in the fault tolerant model}

\author{Surender Baswana\\
   Department of CSE,\\
   I.I.T. Kanpur, India\\
   {\small\texttt{sbaswana@cse.iitk.ac.in}}
 \and
   Keerti Choudhary\\
   Department of CSE,\\
   I.I.T. Kanpur, India\\
   {\small\texttt{keerti@cse.iitk.ac.in}}
 \and
   Liam Roditty\\
   Department of Comp. Sc.\\
   Bar Ilan University, Israel.\\
   {\small\texttt{liam.roditty@biu.ac.il}}
}

\date{}
\maketitle

\begin{abstract}

In this paper we study the problem of maintaining the strongly connected components
of a graph in the presence of failures.
In particular, we show that given a directed graph $G=(V,E)$ with $n=|V|$ and $m=|E|$,
and an integer value $k\geq 1$, there is an algorithm that computes in $O(2^{k}n\log^2 n)$ time 
for any set $F$ of size at most $k$  the strongly connected components of the graph $G\setminus F$.
The running time of our algorithm is almost optimal since the time for outputting the SCCs of $G\setminus F$ 
is at least $\Omega(n)$. The algorithm uses a data structure that is computed in a preprocessing 
phase in polynomial time and is of size $O(2^{k} n^2)$.

Our result is obtained using a new observation on the relation between strongly connected components 
(SCCs) and reachability. More specifically, one of the main building blocks in our result is a restricted variant 
of the problem in which we only compute strongly connected components that intersect a certain path. 
Restricting our attention to a path allows us to implicitly compute reachability between the path vertices and
the rest of the graph in time that depends logarithmically  rather than linearly  in  the  size of the path. 
This new observation alone, however, is not enough, since we need to find an efficient way to represent 
the strongly connected components using paths. For this purpose we use a mixture of old and classical 
techniques such as the heavy path decomposition of Sleator and Tarjan~\cite{ST:83} and the classical 
Depth-First-Search algorithm. Although, these are by now standard techniques, we are not aware of any 
usage of them in the context of dynamic maintenance of SCCs. Therefore, we expect that our new insights 
and mixture of new and old techniques will be of independent interest.

\end{abstract}


\section{Introduction}

Computing the strongly connected components (SCCs) of a directed graph $G=(V,E)$, where $n=|V|$ and $m=|E|$,
is one of the most fundamental problems in computer science. There are several classical algorithms for computing
the SCCs in $O(m+n)$ time that are taught in any standard undergrad algorithms course~\cite{Cormen}.

In this paper we study the following natural variant of the problem in dynamic graphs. What is the fastest algorithm 
to compute the SCCs of $G\setminus F$, where $F$ is any  set of edges or vertices. The algorithm can use a 
polynomial size  data structure computed in polynomial time for $G$ during a preprocessing phase.

The main result of this paper is:
\begin{theorem}\label{T-main}
There is an algorithm that computes the SCCs of $G\setminus F$, for any set $F$ of $k$ edges or vertices,
in $O(2^{k}n\log^2 n)$ time. The algorithm uses a data structure of size $O(2^{k} n^2)$ computed in
$O(2^kn^2m)$ time for $G$ during a preprocessing phase.
\end{theorem}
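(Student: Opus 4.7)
The plan is to split the work into a preprocessing phase that constructs a fault-tolerant reachability oracle, and a query phase that combines (a) a subroutine extracting every SCC of $G\setminus F$ that intersects a given path, with (b) a heavy-path recursion over a DFS tree of $G\setminus F$. In the preprocessing phase I would build a \kftrs\ that answers queries ``is $t$ reachable from $s$ in $G\setminus F$?'' for any $F$ with $|F|\le k$ in $O(2^{k})$ time, at a cost of $O(2^{k}n)$ space per source and $O(2^{k}nm)$ time per source. Over all $n$ sources this delivers the $O(2^{k}n^{2})$ storage and $O(2^{k}n^{2}m)$ preprocessing stated in Theorem~\ref{T-main}, and all dependence on the failure set $F$ during the query phase is absorbed into oracle calls.

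For the path-restricted subroutine, fix a simple path $P=\langle v_{1},\ldots,v_{\ell}\rangle$ of $G\setminus F$ and let $u\in V$. Because the whole of $P$ lies in $G\setminus F$, the predicate ``$v_{i}\leadsto u$'' is downward-closed in $i$ (any witness from $v_{i}$ can be prefixed by the sub-path from $v_{i'}$ for $i'\le i$), and ``$u\leadsto v_{i}$'' is upward-closed in $i$. Hence by two binary searches I can identify the largest $b(u)$ with $v_{b(u)}\leadsto u$ and the smallest $a(u)$ with $u\leadsto v_{a(u)}$, using $O(\log \ell)$ oracle calls in total. The vertex $u$ belongs to an SCC meeting $P$ iff $a(u)\le b(u)$, in which case that SCC is the common SCC of $v_{a(u)}$ and $v_{b(u)}$, which by the monotonicity argument occupies a contiguous sub-interval of $P$. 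Processing every $u\in V$ against $P$ costs $O(2^{k}n\log \ell)$.

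To cover all SCCs of $G\setminus F$, I would build a DFS tree of $G\setminus F$ and apply the heavy-path decomposition of Sleator and Tarjan. The algorithm processes the topmost heavy path $P_{0}$ with the subroutine, emits the SCCs intersecting $P_{0}$, and recurses into each subtree hanging off $P_{0}$. A classical property of DFS guarantees that each remaining SCC is confined to a single such subtree, so the recursion is consistent. Since every root-to-leaf route in the DFS tree crosses only $O(\log n)$ heavy paths, the recursion depth is $O(\log n)$, and the cost summed over all subroutine calls at one level is at most $O(2^{k}n\log n)$ (because surviving vertex counts across sibling subtrees sum to at most $n$), so the total telescopes to $O(2^{k}n\log^{2} n)$.

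The main obstacle I anticipate is the path-restricted subroutine, in two respects. First, the monotonicity/contiguity property above relies delicately on $P$ being a path \emph{of $G\setminus F$}; since $F$ is revealed only at query time, the path must be extracted from a DFS of $G\setminus F$ rather than a precomputed decomposition of $G$, which must be coordinated with the heavy-path recursion. Second, each fault-tolerant reachability call must actually run in $O(2^{k})$ rather than $O(2^{k}n)$ time, which is what forces the preprocessing size to be $O(2^{k}n^{2})$ instead of the $O(2^{k}n)$-per-source sparse subgraph alone; realising this query time while keeping construction polynomial is the crux of matching the advertised running time. A secondary and easier concern is verifying that removal of the emitted SCCs preserves the heavy-path bound on the residual DFS forest, which is inherited because the residuals are sub-trees of the original DFS tree.
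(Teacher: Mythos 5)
Your high-level architecture matches the paper's: a DFS tree, a heavy-path decomposition, a path-restricted SCC subroutine, and the $k$-FTRS as the underlying reachability preserver. However, there are two substantive gaps, both of which you flag yourself as obstacles but neither of which you resolve, and both of which the paper resolves in ways your outline does not anticipate.

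First and most seriously, your path-restricted subroutine performs a per-vertex binary search over the path, requiring $O(\log\ell)$ fault-tolerant reachability queries per vertex, each of which must run in $O(2^k)$ time to hit the target bound. No such oracle is available: the $k$-FTRS gives a sparse subgraph of size $O(2^k n)$, so a single reachability query by traversal costs $O(2^k n)$, and there is no known way to compress this to $O(2^k)$ per pair with polynomial preprocessing. The paper sidesteps this entirely. Rather than binary-searching per vertex, it binary-searches \emph{over the path once}, recursively splitting at a midpoint $x_{mid}$ and computing, in a single $O(2^k|A|)$-time traversal of the $k$-FTRS of $x_{mid}$ restricted to the current candidate set $A$, which vertices of $A$ are reachable from $x_{mid}$. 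The amortization then works level-by-level over the recursion tree: at each of the $O(\log n)$ levels the candidate sets partition the vertices, so the per-level cost is $O(2^k n)$ and the total is $O(2^k n\log n)$. Making the traversal restrictable to $A$ relies on Lemma~\ref{lemma:set_A} (the candidate set is closed under intermediate vertices of witness paths) and Lemma~\ref{lemma:kftrs-set} (the induced sub-$k$-FTRS remains a $k$-FTRS). This divide-and-conquer, not a fast oracle, is the technical core of the subroutine, and it is absent from your plan.

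Second, you propose to build the DFS tree and heavy-path decomposition on $G\setminus F$, and you correctly note that $F$ is only known at query time so this "must be coordinated." But the paper does not build a new DFS tree per query; that would be too slow. It builds the DFS tree $T$ and the heavy-path decomposition once, on $G$, at preprocessing time. The containment lemma (Lemma~\ref{lemma:subtree}) still holds because any path of $G\setminus F$ is also a path of $G$, so the classical DFS cross-ancestor property (Lemma~\ref{lemma:DFS-property}) applies to $T$. The complication that a precomputed heavy path may contain failed edges is handled by splitting the path into its $\ell+1$ surviving segments and running the subroutine on each (Theorem~\ref{theorem:split-path}); the total overhead is controlled because the $\ell$'s sum to at most $k\le\log n$ over the whole decomposition. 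Your outline leaves both of these issues open, and without them the argument does not close.
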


Since the time for outputting the SCCs of $G\setminus F$ is at least $\Omega(n)$,
the running time of our algorithm is optimal (up to a polylogarithmic factor) for any fixed value of $k$.

This dynamic model is usually called the fault tolerant model  and its most important parameter is the time 
that it takes to compute the output in the presence of faults. It is an important theoretical model as it can be
viewed as a restriction of the deletion only (decremental) model in which  edges (or vertices) are deleted 
one after another and queries are answered between deletions. The fault tolerant  model is especially useful 
in cases where the worst case update time in the more general decremental model is high.

There is wide literature on the problem of decremental SCCs. Recently, in a major breakthrough, Henzinger, 
Krinninger and Nanongkai \cite{HenzingerKN14} presented a randomized algorithm with $O(mn^{0.9+o(1)})$ 
total update time and broke the barrier of $\Omega(mn)$ for the problem. Even more recently, 
Chechik et al.~\cite{ChETAL} obtained an improved  total  running time of $O(m\sqrt{n \log n})$.

However, these algorithms and in fact all the previous algorithms have an $\Omega(m)$ worst case update time 
for a single edge deletion. This is not a coincidence. Recent developments in conditional lower bounds by Abboud 
and V. Williams~\cite{AbboudW14} and  by Henzinger, Krinninger, Nanongkai and Saranurak~\cite{HenzingerKNS15}  
showed that unless a major breakthrough happens, the worst case update time of a single operation in any algorithm 
for decremental SCCs is $\Omega(m)$. Therefore, in order to obtain further theoretical understanding on the problem 
of decremental SCCs, and in particular on the worst case update time it is only natural to focus on the restricted 
dynamic model of fault tolerant.

In the recent decade several different researchers used the fault tolerant model to study the worst case update 
time per operation  for dynamic  connectivity in undirected graphs. P\v{a}tra\c{s}cu and Thorup~\cite{PaTh07} 
presented connectivity algorithms that support edge deletions in this model. Their result was improved by the 
recent polylogarithmic  worst case update time algorithm of Kapron, King and Mountjoy~\cite{KapronKM13}. 
Duan and Pettie\cite{DuPe10, DuPe17} used this model to obtain connectivity algorithms that support vertex 
deletions.

In directed graphs, very recently, Georgiadis, Italiano and Parotsidis~\cite{GeItPa17} considered the 
problem of SCCs but only for a single edge or a single vertex failure, that is $|F|=1$. They showed that it is 
possible to compute the SCCs of $G\setminus \{e\}$ for any $e\in E$ (or of $G\setminus \{v\}$ for any $v \in V$) 
in $O(n)$ time using a data structure of size $O(n)$ that was computed for $G$ in a preprocessing phase in 
$O(m+n)$ time. Our result is the first generalized result for any constant size $F$. This comes with the price 
of an extra $O(\log^2 n)$ factor in the running time, a slower preprocessing time and a larger data structure.
In \cite{GeItPa17}, Georgiadis, Italiano and Parotsidis also considered the problem of answering strong connectivity 
queries after one failure. They show construction of an $O(n)$ size oracle that can answer in constant time 
whether any two given vertices of the graph are strongly connected after failure of a single edge or a single vertex.

In a recent result~\cite{OUR-STOC} we considered the problem of finding a sparse subgraph that
preserves single source reachability. More specifically, given a directed graph $G=(V,E)$ and a vertex $s\in V$,
a subgraph $H$ of $G$ is said to be a $k$-Fault Tolerant Reachability Subgraph ($\kftrs$) for $G$  if for any set
$F$ of at most $k$ edges (or vertices), a vertex $v\in V$ is reachable from $s$ in $G\setminus F$ if and only if
$v$ is reachable from $s$ in $H\setminus F$. In~\cite{OUR-STOC} we proved that there exists a $\kftrs$
for $s$ with at most $2^k n$ edges.

Using the $k$-FTRS structure, it is relatively straightforward to obtain a data structure that, for any pair of vertices 
$u,v \in V$ and any set $F$ of size $k$, answers in $O(2^{k} n)$ time queries of the form:
$$\text{``Are $u$ and $v$ in the same SCC of $G\setminus F$?''}$$

The data structure consists of a $\kftrs$ for every $v\in V$. It is easy to see that $u$ and $v$ are in the same 
SCC of $G\setminus F$ if and only if $v$ is reachable from $u$  in $\kftrs(u)\setminus F$ and $u$ is reachable 
from $v$  in $\kftrs(v)\setminus F$. So the query can be answered by checking, using graph traversals, 
whether $v$ is reachable from $u$  in $\kftrs(u)\setminus F$ and whether $u$ is reachable from $v$ in 
$\kftrs(v)\setminus F$. The cost of these two graph traversals is $O(2^k n )$. The size of the data structure 
is $O(2^k n^2)$.

This problem, however, is much easier since the vertices in the query reveal which two $\kftrs$ we  need to  
scan. In the challenge that we address in this paper \emph{all} the SCCs of $G\setminus F$, for an arbitrary 
set $F$, have to be  computed.  However, using the same data structure as before, it is not really clear 
a-priori which of the $\kftrs$ we need to scan.

We note that our algorithm uses the $\kftrs$ which seems to be an essential  tool but is far from being a 
sufficient one and more involved ideas are required. As an example to such a relation between a new result 
and an old tool one can take the deterministic algorithm of {\L}\k{a}cki~\cite{Lacki11} for decremental SCCs 
in which the classical algorithm of Italiano~\cite{Italiano88} for decremental reachability trees in directed 
acyclic graphs is used. The main  contribution of {\L}\k{a}cki~\cite{Lacki11} is a new graph decomposition 
that made it possible to use Italiano's algorithm~\cite{Italiano88} efficiently.

\subsection{An overview of our result}
\label{subsection:overview}
We obtain our $O(2^k n \log^2 n)$-time algorithm using several new ideas. 
One of the main building blocks is surprisingly the following restricted variant of the problem.

\vspace{3mm}
{\em Given any set $F$ of $k$ failed edges and any path $P$ which is intact in $G\setminus F$,
output all the SCCs of $G\setminus F$ that intersect with $P$ (i.e. contain at least one vertex of $P$).}
\vspace{3mm}

To solve this restricted version, we implicitly solve the problem of reachability from $x$ (and to $x$) in 
$G\setminus F$, for each $x\in P$. Though it is trivial to do so in time $O(2^kn|P|)$ using $\kftrs$ of 
each vertex on $P$, our goal is to preform this computation in $O(2^k n \log n)$ time, that is, in running 
time that is \emph{independent} of the length of $P$ (up to a logarithmic factor). For this we use a careful 
insight into the structure of reachability between $P$ and $V$. Specifically, if $v\in V$ is reachable from 
$x\in P$, then $v$ is also reachable from any predecessor of $x$ on $P$, and if $v$ is not reachable 
from $x$, then it cannot be reachable from any successor of $x$ as well. Let $w$ be any vertex on $P$, 
and let $A$ be the set of vertices reachable from $w$ in $G\setminus F$. Then we can split $P$ at $w$ 
to obtain two paths: $P_1$ and $P_2$. We already know that all vertices in $P_1$ have a path to $A$, 
so for $P_1$ we only need to focus on set $V\setminus A$. Also the set of vertices reachable from any 
vertex on $P_2$ must be a subset of $A$, so for $P_2$ we only need to focus on set $A$. This suggests 
a divide-and-conquer approach which along with some more insight into the structure of $\kftrs$
helps us to design an efficient algorithm for computing all the SCCs that intersect $P$.

In order to use the above result to compute all the SCCs of $G\setminus F$, we need a clever partitioning 
of $G$ into a set of vertex disjoint paths. A Depth-First-Search (DFS) tree plays a crucial role here as follows.
Let $P$ be any path from root to a leaf node in a DFS tree $T$. If we compute the SCCs intersecting $P$
and remove them, then the remaining SCCs must be contained in subtrees hanging from path $P$. So to 
compute the remaining SCCs we do not need to work on the entire graph. Instead, we need to work on 
each subtree. In order to pursue this approach efficiently, we need to select path $P$ in such a manner 
that the subtrees hanging from $P$ are of small size. The heavy path decomposition of Sleator and 
Tarjan~\cite{ST:83} helps to achieve this objective.%
\footnote{We note that the heavy path decomposition was also used in the fault tolerant model in STACS'10 paper of
\cite{BaswanaK10-stacs}, but in a completely different way and for a different problem.}

Our algorithm and data structure can be extended to support insertions as well. More specifically,
we can report the SCCs of a graph that is updated by insertions and deletions of $k$ edges in the 
same running time.

\subsection{Related work}

The problem of maintaining the SCCs of a graph was studied in the decremental model. In this model 
the goal is to maintain the SCCs of a graph whose edges are being deleted by an adversary. The main 
parameters in this model are the worst case update time per an  edge deletion and the total update from 
the first edge deletion until the last. Frigioni~\etal \cite{FMNZ01} presented an algorithm that has an
\emph{expected} total update time of $O(mn)$ if all the deleted edges are chosen at random. Roditty 
and Zwick~\cite{RoZw08} presented a Las-Vegas algorithm with an \emph{expected} total update time 
of $O(mn)$ and \emph{expected} worst case update time per a single edge deletion of $O(m)$.
{\L}\k{a}cki~\cite{Lacki11} presented a deterministic algorithm with a total update time of $O(mn)$, and 
thus solved the open problem posed by Roditty and Zwick in~\cite{RoZw08}. However, the worst case 
update time per a single edge deletion of his algorithm is $O(mn)$. Roditty~\cite{Roditty13} 
improved the worst case update time of a single edge deletion to $O(m\log n)$. Recently, in a major 
breakthrough, Henzinger, Krinninger and Nanongkai \cite{HenzingerKN14} presented a randomized 
algorithm with $O(mn^{0.9+o(1)})$ total update time. Very recently, Chechik et al.~\cite{ChETAL} obtained a 
total  update time of $O(m\sqrt{n \log n})$. Note that all the previous works on decremental SCC are with 
$\Omega(m)$ worst case update time. Whereas, our result directly implies $O(n \log^2 n)$ worst case 
update time as long as the total deletion length is constant.

Most of the previous work in the fault tolerant model is on variants of the shortest path problem. Demetrescu, 
Thorup, Chowdhury and Ramachandran~\cite{DT:08} designed an $O(n^2 \log n)$ size data structure that can 
report the distance from $u$ to $v$ avoiding $x$ for any $u,v,x\in V$ in $O(1)$ time. Bernstein and 
Karger~\cite{BK:9} improved the preprocessing time of~\cite{DT:08} to $O(mn ~{\mbox{polylog}}~ n)$. Duan 
and Pettie~\cite{DP:09} designed such a data structure for two vertex faults of size $O(n^2 \log n)$. Weimann 
and Yuster~\cite{WY:13} considered the question of optimizing the preprocessing time using Fast Matrix 
Multiplication (FMM) for graphs with integer weights from the range $[-M,M]$. Grandoni and Vassilevska 
Williams \cite{GW:12} improved the result of \cite{WY:13} based on a novel algorithm for computing all the
replacement paths from a given source vertex in the same running time as solving APSP in directed graphs.

For the problem of single source shortest paths Parter and Peleg~\cite{ParterP:13} showed that there is a 
subgraph with $O(n^{3/2})$ edges that supports one fault. They also showed a matching lower bound. 
Recently, Parter~\cite{Parter15} extended this result to two faults with $O(n^{5/3})$ edges for undirected 
graphs. She also showed a lower bound of $\Omega(n^{5/3})$.

Baswana and Khanna \cite{BaswanaK10-stacs} showed that there is a subgraph with $O(n \log n)$ edges that 
preserves the distances from $s$ up to a multiplicative stretch of $3$ upon failure of any single vertex. 
For the case of edge failures, sparse fault tolerant subgraphs exist for general $k$. 
Bil{\`{o}} et al.~\cite{Bilo16} showed that we can compute a subgraph with $O(kn)$ edges that preserves 
distances from $s$ up to a multiplicative stretch of $(2k +1)$ upon failure of any $k$ edges. They also 
showed that we can compute a data structure of $O(kn \log^2 n)$ size that is able to report the
$(2k +1)$-stretched distance from $s$ in $O(k^2 \log^2 n)$ time.

The questions of finding graph spanners, approximate distance oracles and
compact routing schemes in the fault tolerant model were studied in \cite{DK:11,CR:12,C:13,ChechikCFK17}.

\subsection{Organization of the paper}
We describe notations, terminologies, some basic properties of DFS, heavy-path decomposition, and $\kftrs$ 
in Section 2. In Section 3, we describe the fault tolerant algorithm for computing the strongly connected 
components intersecting any path. We present our main algorithm for handling $k$ failures in Section 4.
In Section 5, we show how to extend our algorithm and data structure to also support insertions.


\section{Preliminaries}

Let $G=(V,E)$ denote the input directed graph on $n=|V|$ vertices and $m=|E|$ edges.
We assume that $G$ is strongly connected, since if it is not the case, then we may apply our result
to each strongly connected component of $G$.
We first introduce some notations that will be used throughout the paper.

\begin{itemize}
\item $T$:~ A DFS tree of $G$.
\item $T(v)$:~ The subtree of $T$ rooted at a vertex $v$.
\item $Path(a,b)$:~ The tree path from $a$ to $b$ in $T$. Here $a$ is assumed to be an ancestor of $b$.
\item $depth(Path(a,b))$:~ The depth of vertex $a$ in $T$.
\item $G^R$:~ The graph obtained by reversing all the edges in graph $G$.
\item $H(A)$:~ The subgraph of a graph $H$ induced by the vertices of subset $A$.
\item $H\setminus F$:~ The graph obtained by deleting the edges in set $F$ from graph $H$.
\item $\inedges(v,H)$:~ The set of all incoming edges to $v$ in graph $H$.
\item $P[a,b]$:~ The subpath of path $P$ from vertex $a$ to vertex $b$, assuming $a$ and $b$ are in $P$ and $a$ precedes $b$.
\item $P\concat Q$~:~ The path formed by concatenating paths $P$ and $Q$ in $G$.
Here it is assumed that the last vertex of $P$ is the same as the first vertex of $Q$.
\end{itemize}

Our algorithm for computing SCCs in a fault tolerant environment crucially uses the concept of a
$k$-fault tolerant reachability subgraph ($k$-FTRS) which is a sparse subgraph that preserves reachability
from a given source vertex even after the failure of at most $k$ edges in $G$. A $k$-FTRS is formally defined as follows.

\begin{definition}[$\kftrs$]
Let $s\in V$ be any designated source. A subgraph $H$ of $G$ is said to be a $k$-Fault
Tolerant Reachability Subgraph ($\kftrs$) of $G$ with respect to $s$ if for any subset $F\subseteq E$
of $k$ edges, a vertex $v\in V$ is reachable from $s$ in $G\setminus F$ if and only if $v$ is reachable
from $s$ in $H\setminus F$.
\label{definition:k-FTRS}
\end{definition}

In~\cite{OUR-STOC}, we present the following result for the construction of a $\kftrs$ for
any $k\ge 1$.

\begin{theorem}[\cite{OUR-STOC}]
There exists an $O(2^kmn)$ time algorithm that for any given integer $k\geq 1$, and any given directed
graph $G$ on $n$ vertices, $m$ edges and a designated source vertex $s$, computes a $\kftrs$ for $G$ with
at most $2^k n$ edges. Moreover, the in-degree of each vertex in this $\kftrs$ is bounded by $2^k$.
\label{theorem:kftrs}
\end{theorem}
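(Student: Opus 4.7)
The plan is to construct the $\kftrs$ $H_k$ by induction on $k$, exploiting max-flow / min-cut structure from the source $s$. The base case $k=0$ is immediate: any reachability tree rooted at $s$ --- for instance, a BFS or DFS tree of $G$ restricted to the vertices reachable from $s$ --- has at most $n-1$ edges, in-degree bound $1 = 2^0$, and trivially preserves reachability under zero failures.

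For the inductive step at level $k$, I would process each vertex $v$ independently. Using Ford--Fulkerson-style augmentation, I compute a max-flow from $s$ to $v$ capped at $\min(2^k,\lambda(s,v))$, and identify the farthest $(s,v)$-min-cut $C_v$ from the residual graph. The subgraph $H_k$ is then obtained by including, for each $v$, the ``last'' edges of the retained augmenting paths entering $v$, together with a recursively constructed $(k-1)$-FTRS-like structure on the $s$-side of $C_v$. The doubling of in-degree from $2^{k-1}$ to $2^k$ is driven by a redundancy argument: each in-edge of $v$ inherited from $H_{k-1}$ is paired with a fresh ``alternative'' in-edge arising from an augmenting path designed to route around it under failure.

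For correctness, consider any failure set $F$ with $|F|\le k$ and a vertex $v$ reachable from $s$ in $G\setminus F$. If $\lambda(s,v)\ge k+1$, then at least one of the $k+1$ edge-disjoint augmenting paths preserved in $H_k$ must avoid $F$, yielding an $s$-to-$v$ path. Otherwise $\lambda(s,v)\le k$: the entire farthest min-cut $C_v$ is included in $H_k$, and the inductive hypothesis applied on the $s$-side of $C_v$ (with an appropriately reduced failure budget) preserves reachability to each tail vertex of $C_v$, which then routes onward to $v$ through $C_v\setminus F$. Summing, the total edge count telescopes to $\sum_{v\in V}\text{indeg}(v)\le 2^k n$, and the running time is $O(2^k m)$ per vertex (for the augmenting-path stage), totalling $O(2^k m n)$.

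The main obstacle, I expect, is two-fold. First is the amortization: to achieve $O(2^k m n)$ rather than the naive $O(k\cdot 2^k m n)$ that would arise from running fresh max-flow computations at every inductive level, the inductive FTRS structure on the $s$-side of $C_v$ must be built in-place during the same augmenting-path process that identifies $C_v$, so that work at level $k$ subsumes the work at all lower levels. Second is correctness under ``mixed'' failures --- where $F$ hits some edges of $C_v$ and some edges strictly on the $s$-side --- which requires invoking the inductive hypothesis with precisely the right residual budget on the right residual sub-instance, and verifying that this sub-instance still satisfies a reachability invariant strong enough to conclude $s$-to-$v$ connectivity in $H_k\setminus F$.
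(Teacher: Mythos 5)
This theorem is imported into the paper from~\cite{OUR-STOC} without proof, so there is no in-paper argument to compare against; I will therefore assess your sketch on its own terms. You have correctly identified the key tools of that construction --- per-vertex max-flow from $s$ capped at $2^k$, the farthest $(s,v)$-min-cut, and induction on $k$ --- and you correctly flag mixed failures as the delicate case. But as written the sketch has a gap that is not an amortization detail: it is a missing lemma without which the argument does not close.

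The construction you describe adds to $H_k$, for each $v$, only the \emph{last} edges of the retained flow paths into $v$. Every intermediate vertex $u$ on those flow paths also has its in-edges pruned to at most $2^k$ by its own instance of the procedure, and nothing in your sketch guarantees that the pruned in-edges at $u$ contain the edges your augmenting paths for $v$ use. So the claim ``at least one of the $k+1$ edge-disjoint augmenting paths preserved in $H_k$ must avoid $F$'' presupposes that these paths \emph{are} preserved in $H_k$, which is exactly what needs proof. The actual argument closes this with a \emph{locality lemma}: replacing the in-edges of any single vertex $u$ by the last edges of its $\le 2^k$ flow paths (chosen via the farthest min-cut) preserves, for \emph{every} vertex $w$ and every $F$ with $|F|\le k$, the reachability of $w$ from $s$ in $G\setminus F$; one then performs this replacement at each vertex in turn. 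Without this lemma the union $\bigcup_v(\cdot)$ is not obviously an FTRS. Related to this, attaching a ``recursively constructed $(k-1)$-FTRS on the $s$-side of $C_v$'' \emph{separately for each $v$} cannot give the stated in-degree bound: a vertex $u$ can lie in many sets $A_v$ and would inherit up to $2^{k-1}$ in-edges from each of those recursive subcalls. The recursion on $k$ belongs entirely in the correctness proof (accounting the residual failure budget $k-|F\cap C_v|$ on the $s$-side, using the structure of the farthest min-cut), not in the construction, which globally keeps at most $2^k$ in-edges per vertex and nothing more.
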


Our algorithm will require the knowledge of the vertices reachable from a vertex $v$ as well as the  vertices that can reach $v$.
So we define a $\kftrs$ of both the graphs - $G$ and $G^R$ with respect to any source vertex $v$ as follows.
\begin{itemize}
\item ${\cal G}^{}(v)$:~ The $\kftrs$ of graph $G$ with $v$ as source obtained by Theorem \ref{theorem:kftrs}.
\item ${\cal G}^{R}(v)$:~ The $\kftrs$ of graph $G^R$ with $v$ as source obtained by Theorem \ref{theorem:kftrs}.
\end{itemize}

The following lemma states that the subgraph of a $\kftrs$ induced by $A\subset V$ can serve as a $\kftrs$
for the subgraph $G(A)$ given that $A$ satisfies certain properties.

\begin{lemma}
Let $s$ be any designated source and $H$ be a $\kftrs$ of $G$ with respect to $s$.
Let $A$ be a subset of $V$ containing $s$ such that every path from $s$ to any vertex in $A$ is contained in $G(A)$.
Then $H(A)$ is a $\kftrs$ of $G(A)$ with respect to $s$.
\label{lemma:kftrs-set}
\end{lemma}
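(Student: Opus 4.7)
The plan is to verify both directions of the $\kftrs$ defining property for $H(A)$ viewed as a subgraph of $G(A)$ with source $s$. Fix an arbitrary set $F$ of $k$ edges of $G(A)$ and an arbitrary vertex $v\in A$; I need to show that $v$ is reachable from $s$ in $G(A)\setminus F$ iff $v$ is reachable from $s$ in $H(A)\setminus F$.

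The easy direction is ``$H(A)\setminus F$ reachable $\Rightarrow G(A)\setminus F$ reachable''. Since $H$ is a subgraph of $G$, taking induced subgraphs on $A$ preserves inclusion, so $H(A)\subseteq G(A)$ and hence $H(A)\setminus F\subseteq G(A)\setminus F$. Any witness path in the smaller graph is immediately a witness path in the larger one.

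The nontrivial direction ``$G(A)\setminus F$ reachable $\Rightarrow H(A)\setminus F$ reachable'' is where both the $\kftrs$ property of $H$ and the structural assumption on $A$ come into play. First, I would use that $G(A)\setminus F \subseteq G\setminus F$, so reachability of $v$ from $s$ in $G(A)\setminus F$ yields reachability of $v$ from $s$ in $G\setminus F$. Applying Definition~\ref{definition:k-FTRS} to $H$, $v$ is then reachable from $s$ in $H\setminus F$; let $P$ be a witness path. Since $H\subseteq G$, $P$ is also a path in $G$ from $s$ to a vertex of $A$, so the hypothesis on $A$ forces $P$ to be contained in $G(A)$, i.e.\ every vertex of $P$ lies in $A$. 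Consequently every edge of $P$, being an edge of $H\setminus F$ with both endpoints in $A$, lies in $H(A)\setminus F$, giving the desired reachability.

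The main (and essentially only) obstacle is the ``pullback'' step in the nontrivial direction: one has to argue that the $H$-path furnished by the $\kftrs$ guarantee does not stray outside $A$. The assumption on $A$ is tailor-made for this, because it controls paths in the ambient graph $G$, and every path in $H\setminus F$ is in particular a path in $G$. Once this is observed, the rest is bookkeeping about induced subgraphs and edge deletion, with no further use of $F$ beyond $F\subseteq E(G(A))\subseteq E(G)$.
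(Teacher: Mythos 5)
Your proof is correct and follows essentially the same route as the paper's: pass from $G(A)\setminus F$ to $G\setminus F$, invoke the $\kftrs$ property of $H$ to get a witness path $P$ in $H\setminus F$, then use the hypothesis on $A$ (applied to $P$ viewed as a path in $G$) to conclude $P$ stays inside $A$ and hence lies in $H(A)\setminus F$. The only cosmetic difference is that you also spell out the trivial converse direction, which the paper leaves implicit.
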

\begin{proof}
Let $F$ be any set of at most $k$ failing edges, and $v$ be any vertex reachable from $s$ in
$G(A)\setminus F$. Since $v$ is reachable from $s$ in $G\setminus F$ and $H$ is a $\kftrs$ of $G$,
so $v$ must be reachable from $s$ in $H \setminus F$ as well. Let $P$ be any path from $s$ to $v$
in $H\setminus F$. Then (i) all edges of $P$ are present in $H$ and (ii) none of the edges of $F$
appear on $P$. Since it is already given that every path from $s$ to any vertex in $A$ is contained
in $G(A)$, therefore, $P$ must be present in $G(A)$. So every vertex of $P$ belongs to $A$. This fact
combined with the inferences (i) and (ii) imply that $P$ must be present in $H(A)\setminus F$.
Hence $H(A)$ is $\kftrs$ of $G(A)$ with respect to $s$.
\end{proof}

The next lemma is an adaptation of Lemma 10 from Tarjan's 
classical paper on Depth First Search~\cite{Tarjan72} to our needs.

\begin{lemma}
Let $T$ be a DFS tree of $G$. Let $a,b\in V$ be two vertices without any ancestor-descendant relationship in $T$, and
assume that $a$ is visited before $b$ in the DFS traversal of $G$ corresponding to tree $T$.
Every path from $a$ to $b$ in $G$ must pass through a common ancestor of $a,b$ in $T$.
\label{lemma:DFS-property}
\end{lemma}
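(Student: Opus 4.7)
The plan is to proceed by strong induction on the length $\ell$ of the path from $a$ to $b$. The base case $\ell = 1$ is vacuous: the direct edge $(a,b)$ cannot exist in $G$, since DFS would examine it while processing $a$, find $b$ still white (as $b \notin T(a)$ means $b$ is only discovered after $a$ finishes), and thereby make $b$ a descendant of $a$, contradicting the hypothesis. For the inductive step $\ell \ge 2$, write $P : a = v_0, v_1, \ldots, v_\ell = b$ and let $i$ be the largest index with $v_i \in T(a)$; such an $i$ exists with $0 \le i \le \ell - 1$. The edge $(v_i, v_{i+1})$ exits $T(a)$, and if $v_{i+1}$ had been unvisited when this edge was examined during $a$'s DFS call, it would have become a descendant of $v_i \in T(a)$; hence $v_{i+1}$ was already discovered, and in fact $\mathrm{start}(v_{i+1}) < \mathrm{start}(a)$ (otherwise $v_{i+1}$ would have been visited within $a$'s call and so lie in $T(a)$).

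This forces one of two sub-cases: either $v_{i+1}$ is \emph{gray} at time $\mathrm{start}(a)$, making it an ancestor of $a$ in $T$; or $v_{i+1}$ is \emph{black} at that time, with its entire DFS interval preceding $a$'s. In the gray sub-case, if $v_{i+1}$ also happens to be an ancestor of $b$ then it is a common ancestor on $P$ and we are done; otherwise $v_{i+1}$ and $b$ are not in an ancestor-descendant relation (since $\mathrm{start}(v_{i+1}) < \mathrm{start}(b)$ rules out $b$ being an ancestor of $v_{i+1}$), so the induction hypothesis applied to the strictly shorter sub-path $v_{i+1}, \ldots, b$ yields a common ancestor $w'$ of $v_{i+1}$ and $b$ lying on the sub-path, and by transitivity through $v_{i+1}$ the vertex $w'$ is also an ancestor of $a$. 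In the black sub-case we have $\mathrm{finish}(v_{i+1}) < \mathrm{start}(a) < \mathrm{finish}(a) < \mathrm{start}(b)$, so the DFS intervals of $v_{i+1}$ and $b$ are disjoint and they have no ancestor-descendant relation; induction on the sub-path produces a common ancestor $w'$, and the DFS interval of $w'$ must contain both those of $v_{i+1}$ and $b$, hence the whole range $[\mathrm{start}(v_{i+1}), \mathrm{finish}(b)]$, which in turn contains $[\mathrm{start}(a), \mathrm{finish}(a)]$ --- making $w'$ an ancestor of $a$ as well.

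The main obstacle will be the black sub-case: here $v_{i+1}$ bears no tree relation to $a$ and so cannot by itself act as a bridge between the two halves of $P$. I will overcome this by invoking the standard nesting property of DFS start/finish intervals to transfer the common ancestor obtained for the shorter sub-path up to $a$. The remaining book-keeping --- verifying that the induction hypothesis actually applies in each sub-case, namely that the sub-path has length at least one, that $v_{i+1}$ is visited before $b$, and that $v_{i+1}$ and $b$ have no ancestor-descendant relation --- is routine but must be checked to make the recursion valid.
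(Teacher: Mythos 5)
Your proof is correct, but it proceeds differently from the paper's. The paper gives a direct contradiction argument: it takes the LCA $z$ of $a,b$, lets $w$ be the child of $z$ toward $a$, and partitions the vertices into $A$ (those visited before $w$ or in $T(w)$) and $B$ (the rest). Assuming the path $P$ avoids all common ancestors, it looks at the last edge $(x,y)$ of $P$ crossing from $A$ to $B$ and observes that $\textsc{Finish-Time}(x)\le\textsc{Finish-Time}(w)<\textsc{Visit-Time}(y)$, which is impossible because every out-neighbor of $x$ must be visited before $x$ finishes. That is a single global cut with one timestamp inequality. You instead induct on the path length, peel off the prefix of $P$ inside $T(a)$, classify the first escaping vertex $v_{i+1}$ by its DFS color at $\mathrm{start}(a)$, and recurse on the tail; in the gray sub-case you get the common ancestor by transitivity through $v_{i+1}$, and in the black sub-case by interval nesting (the returned $w'$'s interval spans both $v_{i+1}$'s and $b$'s, hence also $a$'s). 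The paper's argument is shorter and avoids the book-keeping you flag (confirming the sub-path is nonempty, that $v_{i+1}\ne b$, and that the IH preconditions hold), while yours is more local and mechanical --- no LCA, no global cut, just repeated application of the white-path/nesting facts --- which some may find easier to audit step by step. Both ultimately rest on the same DFS primitives (the white-path theorem and the parenthesis/nesting property), so the two proofs are genuinely different in structure but not in underlying machinery.
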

\begin{proof}
Let us assume on the contrary that there exists a path $P$ from $a$ to $b$ in $G$ that does not
pass through any common ancestor of $a$, $b$ in $T$. Let $z$ be the LCA of $a,b$ in $T$, and $w$ be
the child of $z$ lying on $Path(z,a)$ in $T$. See Figure \ref{figure:dfs}.
Let $A$ be the set of vertices which are either visited before $w$ in $T$ or lie in the subtree $T(w)$,
and $B$ be the set of vertices visited after $w$ in $T$.
Thus $a$ belongs to set $A$, and $b$ belongs to set $B$.
Let $x$ be the last vertex in $P$ that lies in set $A$, and $y$ be the successor of $x$ on path $P$.
Since none of vertices of $P$ is a common ancestor of $a$ and $b$, therefore, the edge $(x,y)$ must belong to set $A\times B$.
So the following relationship must hold true-
$\textsc{Finish-Time}(x)\leq\textsc{Finish-Time}(w)<\textsc{Visit-Time}(y)$.
But such a relationship is not possible since all the out-neighbors of $x$ must be visited before the
DFS traversal finishes for vertex $x$. Hence we get a contradiction.
\end{proof}

\begin{figure}[!ht]
\centering
\includegraphics[scale=0.5, trim=20mm 78mm 40mm 58mm,clip]{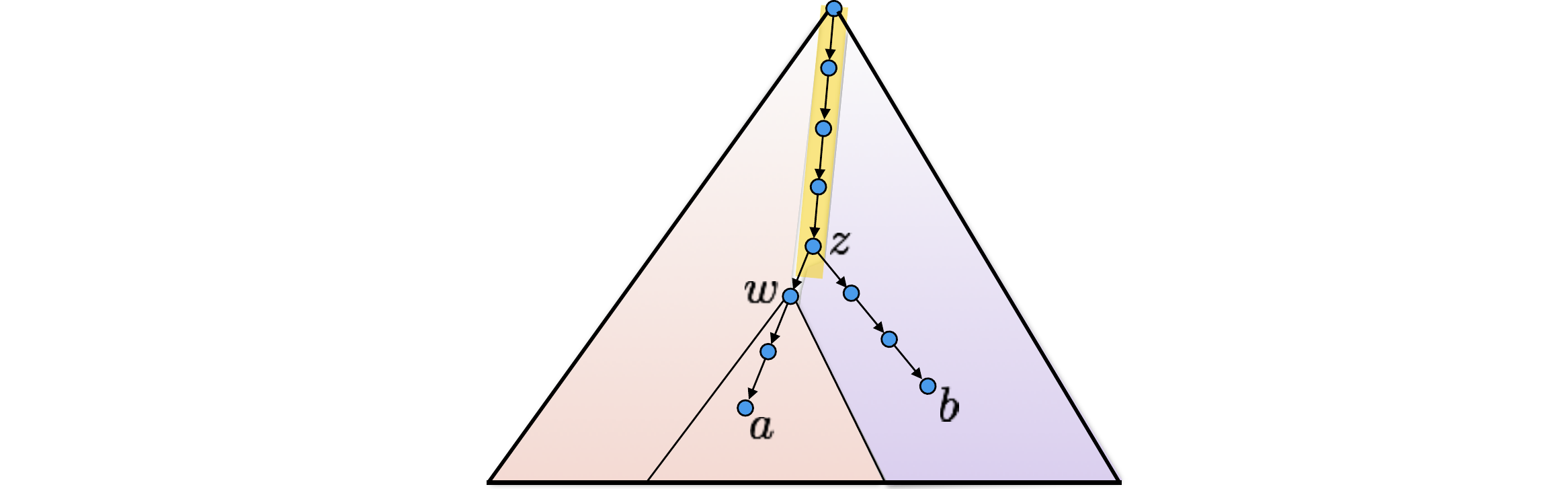}
\caption{ Depiction of vertices $a,b,z,w$ and sets $A$ (shown in orange) and $B$ (shown in purple).}
\label{figure:dfs}
\end{figure}

\subsection{A heavy path decomposition}

The heavy path decomposition of a tree was designed by Sleator and Tarjan~\cite{ST:83} in the context of dynamic trees. This
decomposition has been used in a variety of applications since then. Given any rooted tree $T$, this decomposition
splits $T$ into a set ${\cal P}$ of vertex disjoint paths with the property that any path from the root to a leaf
node in $T$ can be expressed as a concatenation of at most $\log n$ sub-paths of paths in ${\cal P}$. This decomposition is carried
out as follows. Starting from the root, we follow the path downward such that once we are at a node, say $v$, the
next node traversed is the child of $v$ in $T$ whose subtree is of maximum size, where the size of a subtree is the number
of nodes it contains. We terminate upon reaching
a leaf node. Let $P$ be the path obtained in this manner. If we remove $P$ from $T$, we are left with a collection of
subtrees each of size at most $n/2$. Each of these trees hang from $P$ through an edge in $T$. We carry out the
decomposition of these trees recursively.
The following lemma is immediate from the construction of a heavy path decomposition.

\begin{lemma}
For any vertex $v\in V$, the number of paths in $\cal P$ which start from either $v$ or an ancestor of $v$ in $T$
is at most $\log n$.
\label{lemma:heavy-path}
\end{lemma}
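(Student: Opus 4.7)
The plan is to identify the paths of $\cal P$ counted by the lemma with the heavy paths that intersect the tree path from the root of $T$ down to $v$, and then to use the defining ``heaviness'' property to show that subtree sizes at the successive starting vertices of these paths drop by at least a factor of two.

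First I would verify the correspondence. Since the paths of $\cal P$ partition the vertex set of $T$, every vertex on $Path(\text{root},v)$ belongs to exactly one path of $\cal P$, and so $Path(\text{root},v)$ decomposes into contiguous chunks that are subpaths of distinct members of $\cal P$. If a path $Q\in\cal P$ contains some vertex $u$ on $Path(\text{root},v)$, then the top (starting vertex) of $Q$ is an ancestor of $u$, and hence of $v$, so $Q$ is counted by the lemma. Conversely, if the top of $Q$ is $v$ or an ancestor of $v$, then that top already lies on $Path(\text{root},v)$. Thus the paths of $\cal P$ starting at $v$ or an ancestor of $v$ are exactly the paths of $\cal P$ that intersect $Path(\text{root},v)$; call them $P_1,P_2,\ldots,P_k$, ordered so that the top $v_i$ of $P_i$ has increasing depth (with $v_1$ being the root).

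Next I would analyze consecutive tops. For $i<k$, let $u_i$ be the deepest vertex of $P_i$ lying on $Path(\text{root},v)$; then the next vertex of $Path(\text{root},v)$ below $u_i$, call it $c_i$, is not on $P_i$. Since the heavy-path construction always continues from $u_i$ to the child whose subtree is largest, the subtree of $c_i$ has size no larger than that of the heavy child of $u_i$, so $|T(u_i)| \ge 1+|T(\text{heavy child of }u_i)|+|T(c_i)| \ge 1+2\,|T(c_i)|$, giving $|T(c_i)| < |T(u_i)|/2$. By the decomposition of $Path(\text{root},v)$ into heavy chunks, $v_{i+1}$ must equal $c_i$; since $u_i$ is a descendant of $v_i$ we also have $|T(u_i)| \le |T(v_i)|$. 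Hence $|T(v_{i+1})| < |T(v_i)|/2$.

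Iterating gives $|T(v_k)| < n/2^{k-1}$, and since the subtree at $v_k$ contains at least one vertex, this forces $k \le \log n + 1$, i.e.\ $O(\log n)$ paths as claimed. The only subtle step is the identification $v_{i+1}=c_i$, which rests on the fact that every vertex strictly between $u_i$ and the next top on $Path(\text{root},v)$ would otherwise lie on a heavy path whose top is above $c_i$, contradicting that $u_i$ already belongs to $P_i$; this is where one must be careful, but everything else is a direct size-halving calculation.
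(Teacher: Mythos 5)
The paper does not actually prove this lemma --- it declares it ``immediate from the construction of a heavy path decomposition'' and moves on --- so there is no paper proof to compare against. Your argument is the standard one for heavy path decomposition, phrased in terms of a size-halving induction on the tops of the paths meeting $Path(\mathrm{root},v)$, rather than the more common phrasing in terms of light edges (an edge $(p,c)$ with $c$ not the heavy child has $|T(c)|\le |T(p)|/2$, and the paths counted are exactly one plus the number of light edges on $Path(\mathrm{root},v)$); the two formulations are equivalent and your version is correct, including the careful justification that $v_{i+1}=c_i$, which does require the argument you give (any other top between $u_i$ and $c_i$ would force a heavy path to contain $u_i$, violating disjointness of $\cal P$).

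One small remark: your calculation honestly yields $k\le \lfloor\log n\rfloor + 1$, not $\log n$; the paper's ``at most $\log n$'' is slightly imprecise (e.g.\ a star on $3$ vertices already gives $2 > \log_2 3$ paths through a non-heavy leaf). You correctly flag this. The discrepancy is immaterial to how the lemma is used --- the paper only needs $|P_v|=O(\log n)$ to bound $\sum_{Path(a,b)\in P_v}\bigl(\ell(a,b)+1\bigr)$ by $k+O(\log n)$ in Section~\ref{section:improved-time-complexity} --- so your proof is a sound filling-in of the omitted argument.
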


We now introduce the notion of ancestor path.

\begin{definition}
A path $Path(a_1,b_1)\in {\cal P}$ is said to be an ancestor path of
$Path(a_2,b_2)\in {\cal P}$, if $a_1$ is an ancestor of $a_2$ in $T$.
\end{definition}

In this paper, we describe the algorithm for computing SCCs of graph $G$ after any $k$ edge failures.
Vertex failures can be handled by simply splitting a vertex $v$ into an edge $(v_{in},v_{out})$,
where the incoming and outgoing edges of $v$ are directed to $v_{in}$ and from $v_{out}$, respectively.


\section{Computation of SCCs intersecting a given path}

Let $F$ be a set of at most $k$ failing edges, and
$X=(x_1,x_2,\ldots,x_t)$ be any path in $G$ from $x_1$ to $x_t$ which is intact in $G\setminus F$.
In this section, we present an algorithm that outputs in $O(2^k n \log n)$ time the
SCCs of $G\setminus F$ that intersect $X$.

For each $v\in V$, let $X^\In(v)$ be the vertex of $X$ of minimum index (if exists) that is
reachable from $v$  in $G\setminus F$. Similarly, let $X^\Out(v)$ be the vertex of $X$ of maximum index
(if exists) that has a path to $v$ in $G\setminus F$. (See Figure \ref{figure:in_out}).

\begin{figure}[!ht]
\centering
\includegraphics[scale=0.5, trim=20mm 80mm 40mm 73mm,clip]{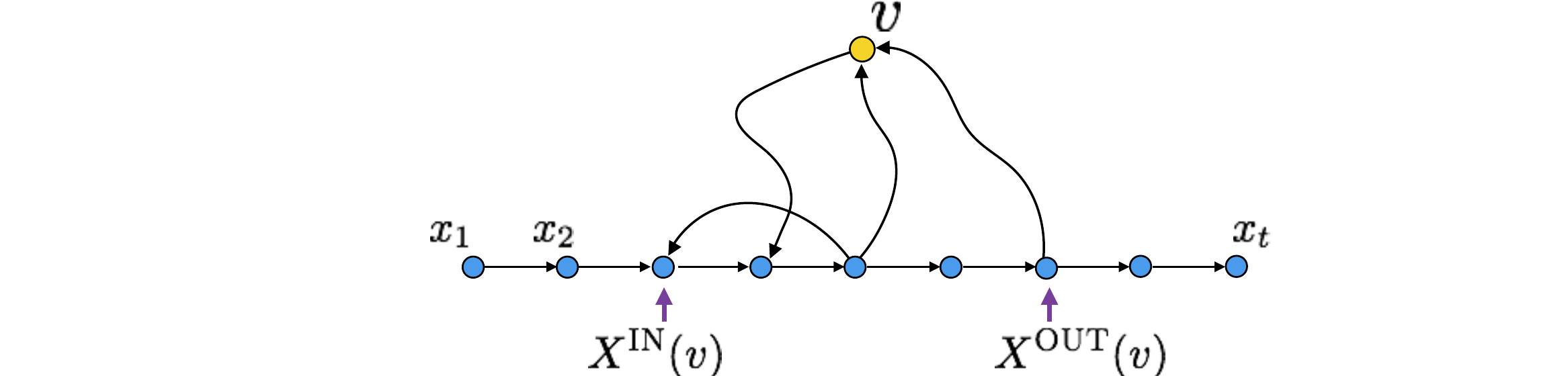}
\caption{ Depiction of $X^\In(v)$ and $X^\Out(v)$ for a vertex $v$ whose SCC intersects $X$.}
\label{figure:in_out}
\end{figure}

We start by proving certain conditions that must hold for a vertex if its SCC in $G\setminus F$ intersects $X$.

\begin{lemma}
For any vertex $w\in V$, the SCC that contains $w$ in $G\setminus F$ intersects $X$ if and only if the
following two conditions are satisfied.

(i) Both $X^\In(w)$ and $X^\Out(w)$ are defined, and

(ii) Either $X^\In(w)=X^\Out(w)$, or $X^\In(w)$ appears before $X^\Out(w)$ on $X$.
\label{lemma:SCC-intersect-X}
\end{lemma}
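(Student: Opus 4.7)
The plan is to prove both directions by unpacking the definitions of $X^\In(w)$ and $X^\Out(w)$ and exploiting the hypothesis that the path $X$ itself is intact in $G \setminus F$.

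For the forward direction (necessity), I would assume that the SCC of $w$ in $G \setminus F$ intersects $X$, and pick some $x_j$ lying in both $X$ and the SCC of $w$. Since $w$ reaches $x_j$ in $G \setminus F$, the set of vertices of $X$ reachable from $w$ is nonempty, so $X^\In(w)$ is defined; symmetrically, since $x_j$ reaches $w$, the set of vertices of $X$ that reach $w$ is nonempty, so $X^\Out(w)$ is defined. This gives condition (i). Writing $X^\In(w) = x_a$ and $X^\Out(w) = x_b$, the minimality in the definition of $X^\In(w)$ forces $a \le j$, and the maximality in the definition of $X^\Out(w)$ forces $b \ge j$. Hence $a \le j \le b$, which is exactly condition (ii).

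For the backward direction (sufficiency), I would suppose both conditions hold, and let $X^\In(w) = x_a$ and $X^\Out(w) = x_b$ with $a \le b$. By definition of $X^\In(w)$ there is a path from $w$ to $x_a$ in $G \setminus F$, and by definition of $X^\Out(w)$ there is a path from $x_b$ to $w$ in $G \setminus F$. The crucial observation is that since $X$ is intact in $G \setminus F$ and $a \le b$, the subpath $X[x_a, x_b]$ lies entirely in $G \setminus F$ and provides a path from $x_a$ to $x_b$ (trivial if $a = b$). Concatenating these three pieces yields paths $w \rightsquigarrow x_a \rightsquigarrow x_b \rightsquigarrow w$ and $x_a \rightsquigarrow x_b \rightsquigarrow w \rightsquigarrow x_a$ in $G \setminus F$, showing that $w$ and $x_a$ lie in the same SCC. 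Since $x_a \in X$, this SCC intersects $X$.

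There is no real obstacle here: the lemma is essentially a reformulation of what it means for the SCC of $w$ to meet $X$, and the only non-trivial ingredient is using the intactness of $X$ to stitch $x_a$ to $x_b$ along $X$ itself. I would keep the write-up short and split it cleanly into the two implications.
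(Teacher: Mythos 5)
Your proof is correct and follows essentially the same route as the paper: both directions unpack the definitions of $X^\In$ and $X^\Out$ and, for sufficiency, use the intactness of $X$ to stitch $X^\In(w)$ to $X^\Out(w)$. The one small difference is in the forward direction: the paper establishes the stronger fact that $X^\In(w)$ and $X^\Out(w)$ are \emph{exactly} the first and last vertices of $w$'s SCC on $X$ (a fact it reuses in the lemma that follows), whereas you pick an arbitrary $x_j$ in the intersection and deduce $a\le j\le b$ directly from minimality and maximality --- a slightly leaner argument that suffices for the statement at hand but does not by itself give the characterization needed later.
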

\begin{proof}
Consider any vertex $w\in V$. Let $S$ be the SCC in $G\setminus F$ that contains $w$ and assume $S$ intersects $X$.
Let $w_1$ and $w_2$ be the first and last vertices of $X$, respectively, that are in $S$.
Since $w$ and $w_1$ are in $S$ there is a path from $w$ to $w_1$ in $G\setminus F$.
Moreover, $w$ cannot reach a vertex that precedes $w_1$ in $X$ since such a vertex will be in $S$ as well and 
it will contradict the definition of $w_1$.  
Therefore, $w_1=X^\In(w)$. Similarly we can prove that $w_2=X^\Out(w)$.
Since $w_1$ and $w_2$ are defined to be the first and last vertices from $S$ on $X$, respectively, 
it follows that either $w_1=w_2$, or $w_1$ precedes $w_2$ on $X$.
Hence conditions (i) and (ii) are satisfied.

Now assume that conditions (i) and (ii) are true.
The definition of $X^\In(\cdot)$ and $X^\Out(\cdot)$ implies that there is a path from
$X^\Out(w)$ to $w$, and a path from $w$ to $X^\In(w)$. Also,  condition (ii) implies that there is a path from
$X^\In(w)$ to $X^\Out(w)$. Thus $w$, $X^\In(w)$ and $X^\Out(w)$ are in the same SCC and it intersects $X$.
\end{proof}

The following lemma states the condition under which any two vertices lie in the same SCC, given that their SCCs intersect $X$.

\begin{lemma}
Let $a,b$ be any two vertices in $V$ whose SCCs intersect $X$. Then $a$ and $b$ lie in the same SCC
if and only if $X^\In(a)=X^\In(b)$ and $X^\Out(a)=X^\Out(b)$.
\end{lemma}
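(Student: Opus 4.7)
The proof proposal is essentially a two‑way argument that reuses the structural observation already extracted in the proof of Lemma~\ref{lemma:SCC-intersect-X}. I expect this to be a short, clean proof because the hard structural work has been done; the only thing to be careful about is to invoke the fact that $X$ itself is intact in $G \setminus F$.

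For the forward direction, suppose $a$ and $b$ lie in a common SCC $S$ of $G \setminus F$, and assume (by hypothesis) that $S \cap X \neq \emptyset$. Let $w_1$ be the first and $w_2$ the last vertex of $X$ that belong to $S$. In the proof of Lemma~\ref{lemma:SCC-intersect-X} it was shown that whenever a vertex $w$ lies in an SCC that intersects $X$, one necessarily has $X^{\In}(w)=w_1$ and $X^{\Out}(w)=w_2$, where $w_1,w_2$ are the first and last vertices of $X$ in that SCC. Applying this observation to both $a$ and $b$, which share the same SCC, gives $X^{\In}(a)=X^{\In}(b)=w_1$ and $X^{\Out}(a)=X^{\Out}(b)=w_2$.

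For the backward direction, let $x=X^{\In}(a)=X^{\In}(b)$ and $y=X^{\Out}(a)=X^{\Out}(b)$. Since the SCC of $a$ intersects $X$, Lemma~\ref{lemma:SCC-intersect-X} guarantees that $x$ either equals $y$ or appears before $y$ on $X$, so the subpath $X[x,y]$ is well defined. Because $X$ is intact in $G\setminus F$, so is $X[x,y]$; hence there is a path from $x$ to $y$ in $G\setminus F$. By the definitions of $X^{\In}(\cdot)$ and $X^{\Out}(\cdot)$, in $G\setminus F$ there are paths $a \to x$, $y \to b$, $b \to x$, and $y \to a$. Concatenating $a \to x$ with $X[x,y]$ with $y \to b$ yields a path from $a$ to $b$; symmetrically, $b \to x$, then $X[x,y]$, then $y \to a$ yields a path from $b$ to $a$. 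Thus $a$ and $b$ are mutually reachable, i.e., they lie in the same SCC of $G\setminus F$.

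The main (minor) obstacle is simply ensuring that one really has a directed path in $G\setminus F$ joining the common $X^{\In}$ and $X^{\Out}$ values; this is immediate from the assumption that the whole path $X$ survives the failure set $F$, which is what lets us traverse $X[x,y]$ freely. No additional machinery beyond Lemma~\ref{lemma:SCC-intersect-X} and the intactness of $X$ is needed.
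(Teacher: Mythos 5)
Your proof is correct, and the forward direction matches the paper exactly (both read off the characterization, established inside the proof of Lemma~\ref{lemma:SCC-intersect-X}, that $X^\In(w)$ and $X^\Out(w)$ are the first and last vertices of $X$ in the SCC of $w$). The backward direction, however, takes a different route. The paper argues abstractly: since $X^\In(a)$ is by that characterization a vertex \emph{belonging to} the SCC of $a$, the equality $X^\In(a)=X^\In(b)$ puts one common vertex in both SCCs, and because SCCs partition $V$ this already forces the two SCCs to coincide (the equality of $X^\Out$ values is then redundant for this direction). You instead give a constructive argument: you pull the precedence of $x=X^\In$ before $y=X^\Out$ from condition~(ii) of Lemma~\ref{lemma:SCC-intersect-X}, use the intactness of $X$ to traverse $X[x,y]$, and explicitly build the cycle $a\to x\to y\to b\to x\to y\to a$ in $G\setminus F$. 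Both arguments are sound; the paper's is shorter and never touches the path $X$ itself, whereas yours is more self-contained and makes the mutual reachability visible without relying on the partition argument. A minor remark: your version uses the hypothesis $X^\Out(a)=X^\Out(b)$ essentially, while the paper's partition argument shows that either one of the two equalities alone already suffices for the ``if'' direction — a slightly stronger fact that your construction does not surface.
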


\begin{proof}
In the proof of Lemma \ref{lemma:SCC-intersect-X}, we show that if SCC of $w$ intersects $X$,
then $X^\In(w)$ and $X^\Out(w)$ are precisely the first and last vertices on $X$ that lie in the SCC of $w$.
Since SCCs forms a partition of $V$, vertices $a$ and $b$ will lie in the same SCC if and only if
$X^\In(a)=X^\In(b)$ and $X^\Out(a)=X^\Out(b)$.
\end{proof}

It follows from the above two lemmas that in order to compute the SCCs in $G\setminus F$ that intersect with $X$,
it suffices to compute $X^\In(\cdot)$ and $X^\Out(\cdot)$ for all vertices in $V$.
It suffices to focus on computation of $X^\Out(\cdot)$ for all the vertices of $V$, since $X^\In(\cdot)$ can be computed
in an analogous manner by just looking at graph $G^R$.
One trivial approach to achieve this goal is to compute the set $V_i$ consisting of
all vertices reachable from each $x_i$ by performing a BFS or DFS traversal of graph ${\cal G}(x_i)\setminus F$.
Using this straightforward approach it takes $O(2^knt)$ time to complete the task of computing
$X^\Out(v)$ for every $v\in V$, while our target is to do so in $O(2^kn \log n)$ time.

Observe the nested structure underlying $V_i$'s, that is, $V_1\supseteq V_2\supseteq \cdots \supseteq V_t$.
Consider any vertex $x_\ell, 1<\ell <t$. The nested structure implies for every $v\in V_\ell$ that $X^\Out(v)$
must be on the portion $(x_\ell,\ldots,x_t)$ of $X$. Similarly, it implies
for every $v\in V_1\setminus V_\ell$ that $X^\Out(v)$ must be on the portion $(x_1,\ldots,x_{\ell-1})$ of $X$.
This suggests a  divide and conquer approach to efficiently compute $X^\Out(\cdot)$.
We first compute the sets $V_1$ and $V_t$ in $O(2^k n)$ time each.
For each $v\in V\setminus V_1$, we assign NULL to $X^\Out(v)$ as it is not reachable from any vertex on $X$;
and for each $v\in V_t$ we set $X^\Out(v)$ to $x_t$.
For vertices in set $V_1\setminus V_t$, $X^\Out(\cdot)$ is computed by calling the function
Binary-Search($1,t-1,V_1\setminus V_t$). See Algorithm \ref{Binary-Search}.

\begin{algorithm}
\uIf {$(i=j)$}{\lForEach{$v\in A$}{$X^{\Out}(v)=x_i$}}
\Else {
$mid \gets \lceil(i+j)/2\rceil$\;
$B \gets$ Reach($x_{mid},A$)\tcc*[r]{vertices in $A$ reachable from $x_{mid}$}
Binary-Search($i,mid\text{-}1,A\text{$\setminus$} B$)\;
Binary-Search($mid,j,B$)\;
}
\caption{Binary-Search($i,j,A$)}
\label{Binary-Search}
\end{algorithm}

In order to explain the function Binary-Search, we first state an assertion that
holds true for each recursive call of the function Binary-Search. We prove this assertion in the next subsection.
\begin{description}
\item[Assertion 1:]
If Binary-Search($i,j,A$) is called, then $A$ is precisely the set of those vertices
$v\in V$ whose $X^\Out(v)$ lies on the path $(x_i,x_{i+1},\ldots,x_j)$.
\end{description}

We now explain the execution of function Binary-Search($i,j,A$).
If $i=j$, then we assign $x_i$ to $X^{\Out}(v)$ for each $v\in A$ as justified by Assertion 1.
Let us consider the case when $i\neq j$.
In this case we first compute the index $mid = \lceil(i+j)/2\rceil$.
Next we compute the set $B$ consisting of all the vertices in $A$ that are reachable from $x_{mid}$.
This set is computed using the function Reach($x_{mid},A$) which is explained later in Subsection \ref{function_Reach}.
As follows from Assertion 1, $X^\Out(v)$ for each vertex $v\in A$ must belong to path $(x_i,\ldots,x_j)$.
Thus, $X^{\Out}(v)$ for all $v\in B$ must lie on path $(x_{mid},\ldots,x_j)$, and
$X^{\Out}(v)$ for all $v\in A\setminus B$ must lie on path $(x_i,\ldots,x_{mid\text{-}1})$.
So for computing $X^\Out(\cdot)$ for vertices in $A\setminus B$ and $B$, we invoke the functions
Binary-Search($i,mid\text{-}1,A\text{$\setminus$} B$) and Binary-Search($mid,j,B$), respectively.


\subsection{Proof of correctness of algorithm}

In this section we prove that Assertion 1 holds for each call of the Binary-Search function. We also show how this
assertion implies that $X^\Out(v)$ is correctly computed  for every $v\in V$.

Let us first see how Assertion 1 implies the correctness of our algorithm.
It follows from the description of the algorithm that for each $i, (1\leq i\leq t-1)$, the function
Binary-Search($i,i,A$) is invoked for some $A\subseteq V$. Assertion 1 implies that $A$
must be the set of all those vertices $v\in V$ such that $X^\Out(v)=x_i$. As can be seen,
the algorithm in this case correctly sets $X^\Out(v)$ to $x_i$ for each $v\in A$.

We now show that Assertion 1 holds true in each call of the function Binary-Search.
It is easy to see that Assertion 1 holds true for the first call
Binary-Search($1,t-1,V_1\setminus V_t$). Consider any intermediate recursive
call Binary-Search($i,j,A$), where $i\neq j$. It suffices to show that if Assertion 1
holds true for this call, then it also holds true for the two recursive calls that it invokes.
Thus let us assume $A$ is the set of those vertices $v\in V$ whose $X^\Out(v)$ lies on the path $(x_i,x_{i+1},\ldots,x_j)$.
Recall that we compute index $mid$ lying between $i$ and $j$, and find the set $B$ consisting of
all those vertices in $A$ that are reachable from $x_{mid}$.
From the nested structure of the sets $V_i,V_{i+1},\ldots, V_j$, it follows that $X^{\Out}(v)$ for all $v\in B$
must lie on path $(x_{mid},\ldots,x_j)$, and
$X^{\Out}(v)$ for all $v\in A\setminus B$ must lie on path $(x_i,\ldots,x_{mid\text{-}1})$.
That is, $B$ is precisely the set of those vertices whose $X^\Out(v)$ lies on the path $(x_{mid},\ldots,x_j)$,
and $A\setminus B$ is precisely the set of those vertices whose $X^\Out(v)$ lies on the path $(x_i,\ldots,x_{mid\text{-}1})$.
Thus Assertion 1 holds true for the recursive calls  Binary-Search($i,mid\text{-}1,A\text{$\setminus$} B$) and
Binary-Search($mid,j,B$) as well.


\subsection{Implementation of function Reach}
\label{function_Reach}

The main challenge left now is to find an efficient implementation of the function Reach 
which has to compute the  vertices of its input set $A$ that are reachable from a given vertex $x \in X$
in  $G\setminus F$. The function Reach can be easily implemented  by a standard graph traversal initiated from $x$
in the graph  ${\cal G}(x)\setminus F$ (recall that  ${\cal G}(x)$ is a $\kftrs$ of $x$ in $G$).  
This, however,  will take $O(2^k n)$ time which is not good enough for our purpose, as the total running time 
of Binary-Search in this case will become $O(|X|2^k n)$. 
Our aim is to implement the function Reach in $O(2^k|A|)$ time. 
In general, for an arbitrary set $A$ this might not be possible.
This is because $A$ might contain a vertex that is reachable from $x$ via a single path whose vertices are 
not in $A$, therefore, the algorithm must  explore edges incident to  vertices that are not in   $A$ as well.
However, the following lemma, that exploits Assertion 1, suggests that in our case as the call to Reach is done 
while running the function Binary-Search we can restrict ourselves to the set $A$ only.

\begin{lemma}
If \normalfont{Binary-Search}$(i,j,A)$ is called and $\ell\in[i,j]$, then 
for each path $P$ from $x_\ell$ to a vertex $z\in A$ in graph in $G\setminus F$, 
all the vertices of $P$ must be in the set $A$.
\label{lemma:set_A}
\end{lemma}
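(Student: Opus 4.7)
The plan is to invoke Assertion 1 (which, by the inductive argument of the preceding subsection, may be freely used inside the lemma) to translate membership in $A$ into a statement about $X^{\Out}(\cdot)$, and then to chain reachability along $P$ so that every intermediate vertex inherits the same two-sided bound on the index of its $X^{\Out}$-value.

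More concretely, I would fix an arbitrary vertex $y$ on the path $P$ from $x_\ell$ to $z$, and argue that the index of $X^{\Out}(y)$ lies in $[i,j]$, which by Assertion 1 forces $y\in A$. For the lower bound, the prefix of $P$ from $x_\ell$ to $y$ is a path in $G\setminus F$, so $x_\ell$ reaches $y$ in $G\setminus F$; hence $X^{\Out}(y)$ is defined and its index is at least $\ell$, and since $\ell\geq i$, the lower bound is established. For the upper bound, I would argue by contradiction: if the index of $X^{\Out}(y)$ were some $m>j$, then by definition $x_m$ would reach $y$ in $G\setminus F$, and concatenating with the suffix of $P$ from $y$ to $z$ would show that $x_m$ reaches $z$ in $G\setminus F$. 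This would force the index of $X^{\Out}(z)$ to be at least $m>j$, contradicting $z\in A$ and Assertion 1.

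There is no real obstacle; the only subtlety is making explicit the nested monotonicity already used to justify the recursion, namely that the $V_i$'s (vertices reachable from $x_i$ in $G\setminus F$) are nested, so the index of $X^{\Out}(\cdot)$ can only grow along a directed path in $G\setminus F$. Once this is observed, the two one-sided arguments above are immediate. Since $y$ was an arbitrary vertex on $P$, we conclude that every vertex of $P$ belongs to $A$, which is exactly the claim of the lemma.
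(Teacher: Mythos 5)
Your proof is correct and follows essentially the same argument as the paper's. The paper first unpacks Assertion 1 into the equivalent characterization ``$A=\{v:\ v$ is reachable from $x_i$ but not from $x_{j+1}$ in $G\setminus F\}$'' and then (i) concatenates $X[x_i,x_\ell]$ with $P[x_\ell,y]$ to show $y$ is reachable from $x_i$, and (ii) notes that reachability of $y$ from $x_{j+1}$ would propagate to $z$, contradicting $z\in A$; your lower/upper bound arguments on the index of $X^{\Out}(y)$ are just this same two-step chaining phrased directly in terms of $X^{\Out}$.
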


\begin{proof}
Assertion 1 implies that $A$ is precisely the set of those vertices in $V$ which are
reachable from $x_i$ but not reachable from $x_{j+1}$ in $G\setminus F$.
Consider any vertex $y\in P$. Observe that $y$ is reachable from $x_i$ by the path $X[x_i,x_\ell]$::$P[x_\ell,y]$.
Moreover, $y$ is not reachable from $x_{j+1}$, because otherwise $z$ will also be reachable from $x_{j+1}$,
which is not possible since $z\in A$. Thus vertex $y$ lies in the set $A$.
\end{proof}

Lemma \ref{lemma:set_A} and Lemma \ref{lemma:kftrs-set} imply that in order to find
the vertices in $A$ that are reachable from $x_{mid}$, it suffices to
do traversal from $x_{mid}$ in the graph $G_A$, the induced subgraph of $A$ in   ${\cal G}(x)\setminus F$, 
that has $O(2^k |A|)$ edges. 
Therefore, based on the above discussion, Algorithm \ref{Reach} given below, is an implementation
of function Reach that takes $O(2^k |A|)$ time.

\begin{algorithm}
$H \gets {\cal G}(x_{mid})\setminus F$\;
$G_A \gets (A,\emptyset)$\tcc*[r]{an empty graph}
\ForEach {$v\in A$}{
\ForEach {$(y,v)\in \inedges(v,H)$}{
\lIf{$y\in A$}{$E(G_A)=E(G_A)\cup (y,v)$}
}
}
$B \gets$ Vertices reachable from $x_{mid}$ obtained by a BFS or DFS traversal of graph $G_A$\;
Return $B$\;
\caption{Reach($x_{mid},A$)}
\label{Reach}
\end{algorithm}

The following lemma gives the analysis of running time of Binary-Search($1,t-1,V_1\setminus V_t$).

\begin{lemma}
The  total running time of \normalfont{Binary-Search}$(1,t-1,V_1\setminus V_t)$ is $O(2^{k} n \log n)$.
\label{lemma:time-binary-search}
\end{lemma}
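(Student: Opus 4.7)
The plan is to bound the total running time by multiplying two quantities: (i) the depth of the recursion tree of Binary-Search, and (ii) the total work performed at any single level of this recursion tree. I will show that the depth is $O(\log n)$ and that the total work at each level is $O(2^k n)$.

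First I would argue the depth. Each recursive call Binary-Search$(i,j,A)$ with $i \neq j$ splits the index range $[i,j]$ into $[i,mid-1]$ and $[mid,j]$ where $mid = \lceil (i+j)/2 \rceil$, so the length of the range at least halves. Since the initial call has range $[1,t-1]$ with $t \leq n$, the recursion depth is at most $O(\log n)$.

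Next I would bound the work at a fixed level of the recursion tree. The crucial observation is that for the two recursive calls spawned by Binary-Search$(i,j,A)$, the input sets $B$ and $A\setminus B$ are disjoint and partition $A$. A straightforward induction then shows that the input sets $A$ of all Binary-Search calls at any fixed recursion level are pairwise disjoint subsets of $V_1 \setminus V_t$, so their sizes sum to at most $n$. Hence it suffices to show that a single call Binary-Search$(i,j,A)$ performs $O(2^k |A|)$ work excluding the time spent inside its recursive sub-calls. The only non-trivial work is the call to Reach$(x_{mid},A)$, and here I would invoke two earlier facts: by Theorem~\ref{theorem:kftrs}, every vertex of ${\cal G}(x_{mid})$ has in-degree at most $2^k$, so constructing the edge set $E(G_A)$ by scanning $\inedges(v,{\cal G}(x_{mid})\setminus F)$ for each $v \in A$ takes $O(2^k|A|)$ time and yields a graph with $O(2^k|A|)$ edges. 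The subsequent BFS/DFS traversal runs in $O(2^k |A|)$ time as well. Combining, a single call at a given level costs $O(2^k|A|)$, and summing over the level gives $O(2^k n)$.

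For correctness of restricting the traversal to $G_A$, I would appeal to Lemma~\ref{lemma:set_A} together with Lemma~\ref{lemma:kftrs-set}: the former guarantees that every path in $G\setminus F$ from $x_{mid}$ to a vertex of $A$ lies entirely inside $A$, which is exactly the hypothesis needed by the latter to ensure that ${\cal G}(x_{mid})(A)\setminus F$ still preserves reachability from $x_{mid}$ within $A$. Finally, multiplying the depth bound and the per-level bound yields the claimed total of $O(2^k n \log n)$. The leaf calls (with $i=j$) each do $O(|A|)$ assignment work and contribute only $O(n \log n)$ in aggregate, which is absorbed. The main subtlety to verify carefully is the disjointness of the $A$-sets across siblings at each recursion level, as otherwise one could not charge the per-call cost linearly against $n$.
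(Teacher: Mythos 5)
Your proof is correct and follows essentially the same approach as the paper: bound the recursion depth by $O(\log n)$, observe that the input sets at any fixed level are pairwise disjoint so the per-level Reach cost sums to $O(2^k n)$, and multiply. The only difference is that you spell out explicitly the in-degree bound from Theorem~\ref{theorem:kftrs} and the appeal to Lemmas~\ref{lemma:set_A} and~\ref{lemma:kftrs-set}, which the paper presents in the preceding subsection rather than inside this lemma's proof.
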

\begin{proof}
The time complexity of Binary-Search($1,t-1,V_1\setminus V_t$) is dominated by the total time taken by
all invocation of function Reach.
Let us consider the recursion tree associated with Binary-Search($1,t-1,V_1\setminus V_t$). It can be seen that
this tree will be of height $O(\log n)$. 
In each call of the Binary-Search, the input set $A$ is partitioned  into two disjoint sets.
As a result, the input sets associated with all recursive calls at any level $j$ in
the recursion tree form a disjoint partition of $V_1\setminus V_t$.
Since the time taken by Reach is $O(2^{k}|A|)$, so the total time taken by all invocations
of Reach at any level $j$ is $O(2^{k}|V_1\setminus V_t|)$.
As there are at most $\log n$ levels in the recursion tree, the
total time taken by Binary-Search($1,t-1,V_1\setminus V_t$) is $O(2^{k} n \log n)$.
\end{proof}

We conclude with the following theorem.

\begin{theorem}
\label{theorem:specific_case}
Let $F$ be any set of at most $k$ failed edges, and $X=\{x_1,x_2,\ldots,x_t\}$ be any path in $G\setminus F$.
If we have prestored the graphs ${\cal G}(x)$ and ${\cal G}^R(x)$ for each $x\in X$, then we can compute all the 
SCCs of $G\setminus F$ which intersect with $X$ in $O(2^k n\log n)$ time.
\end{theorem}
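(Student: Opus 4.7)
The plan is to follow the roadmap laid out in the preceding subsections: reduce the SCC computation to computing, for every vertex $v \in V$, the two ``anchor'' vertices $X^{\In}(v)$ and $X^{\Out}(v)$, and then compute these anchors by an efficient divide-and-conquer.

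First I would invoke Lemma~\ref{lemma:SCC-intersect-X} and the subsequent lemma to reduce the task to computing $X^{\In}(\cdot)$ and $X^{\Out}(\cdot)$ for all vertices: a vertex $w$ lies in an SCC intersecting $X$ iff both anchors are defined and $X^{\In}(w)$ does not come after $X^{\Out}(w)$ on $X$, and two such vertices share an SCC iff their anchors coincide. Once the $X^{\In}$ and $X^{\Out}$ labels are known, the SCCs intersecting $X$ can be assembled in $O(n)$ time by bucketing (e.g.\ radix-sorting) vertices by their anchor pair. By symmetry (replacing $G$ by $G^R$ and using the prestored $\mathcal{G}^R(x)$), it suffices to explain how to compute $X^{\Out}$.

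Next, I would exploit the nested chain $V_1 \supseteq V_2 \supseteq \cdots \supseteq V_t$, where $V_i$ is the set of vertices reachable from $x_i$ in $G\setminus F$. Vertices outside $V_1$ get $X^{\Out}(v)=\text{NULL}$, vertices in $V_t$ get $X^{\Out}(v)=x_t$, and the remaining vertices are processed by the recursive procedure Binary-Search$(1,t-1,V_1\setminus V_t)$. Correctness reduces to verifying Assertion~1 by induction on the recursion: it holds at the root call because $V_1\setminus V_t$ is exactly the set of vertices reachable from $x_1$ but not from $x_t$, and the nested structure of the $V_i$'s guarantees that splitting $A$ by reachability from $x_{mid}$ produces two subsets whose $X^{\Out}$ values lie in $(x_{mid},\ldots,x_j)$ and $(x_i,\ldots,x_{mid-1})$ respectively.

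The main obstacle, as anticipated by the paper, is implementing Reach$(x_{mid},A)$ in $O(2^k|A|)$ time rather than the naive $O(2^k n)$. This is where I would use Lemma~\ref{lemma:set_A}: any path from $x_{mid}$ to a vertex of $A$ in $G\setminus F$ remains entirely inside $A$, since any intermediate vertex would be reachable from $x_i$ but (transitively) unreachable from $x_{j+1}$. Combined with Lemma~\ref{lemma:kftrs-set}, this lets me replace a traversal of $\mathcal{G}(x_{mid})\setminus F$ by a traversal of its induced subgraph on $A$, which has $O(2^k|A|)$ edges because each vertex has in-degree at most $2^k$ in $\mathcal{G}(x_{mid})$. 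Building this induced subgraph takes $O(2^k|A|)$ time by scanning $\inedges(v,\mathcal{G}(x_{mid})\setminus F)$ for every $v \in A$ and keeping only edges with both endpoints in $A$.

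Finally I would bound the running time. The recursion tree of Binary-Search has depth $O(\log t) = O(\log n)$, and because each call partitions its input set $A$ into two disjoint pieces, the input sets at any fixed level of the tree form a partition of $V_1\setminus V_t$. Summing the $O(2^k|A|)$ cost of Reach over one level gives $O(2^k n)$, and summing over $O(\log n)$ levels gives the claimed $O(2^k n \log n)$ bound. Adding the $O(n)$ post-processing to output SCCs completes the proof.
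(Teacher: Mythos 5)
Your proposal matches the paper's proof essentially step for step: the reduction to computing $X^{\In}$ and $X^{\Out}$ via Lemma~\ref{lemma:SCC-intersect-X} and its companion, the divide-and-conquer Binary-Search driven by the nested chain $V_1\supseteq\cdots\supseteq V_t$ with Assertion~1 as the invariant, the use of Lemma~\ref{lemma:set_A} together with Lemma~\ref{lemma:kftrs-set} and the bounded in-degree of the $\kftrs$ to implement Reach in $O(2^k|A|)$ time, and the level-by-level accounting over the $O(\log n)$-deep recursion tree. The argument is correct and follows the paper's own route, with only cosmetic additions such as the explicit $O(n)$ bucketing post-processing.
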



\section{Main Algorithm}
\label{section:main-algo}

In the previous section we showed that given any path $P$, we can compute all the SCCs intersecting $P$ efficiently,
if $P$ is intact in $G\setminus F$.
In the case that $P$ contains $\ell$ failed edges from $F$ then $P$ is decomposed into $\ell+1$ paths, and
we can apply Theorem \ref{theorem:specific_case} to each of these paths separately to get the following theorem:

\begin{theorem}
Let $P$ be any given path in $G$. Then there exists an $O(2^k n|P|)$ size data structure
that for any arbitrary set $F$ of at most $k$ edges computes the SCCs of $G\setminus F$
that intersect the path $P$ in $O((\ell+1) 2^k n \log n)$ time, where 
$\ell~$ ($\ell\leq k$) is the number of edges in $F$ that lie on $P$.  
\label{theorem:split-path}
\end{theorem}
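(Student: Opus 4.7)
\medskip
\noindent\textbf{Proof proposal for Theorem \ref{theorem:split-path}.}
The plan is to reduce directly to Theorem~\ref{theorem:specific_case}, after first cutting the given path $P$ at every failed edge it contains.

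First, I would set up the data structure. For each vertex $x$ lying on $P$, precompute and store the $\kftrs$ graphs ${\cal G}(x)$ and ${\cal G}^R(x)$ using Theorem~\ref{theorem:kftrs}. Since each such graph has at most $2^k n$ edges, the total storage is $O(2^k n |P|)$, matching the claimed bound. This matches exactly the precomputed information required by Theorem~\ref{theorem:specific_case}.

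Given a query set $F$ of at most $k$ edges, I would first scan $F$ to identify the subset $F_P \subseteq F$ of edges that lie on $P$; let $\ell = |F_P| \leq k$. Removing these $\ell$ edges from $P$ splits $P$ into at most $\ell+1$ maximal subpaths $P_1, P_2, \ldots, P_{\ell+1}$, each of which is intact in $G\setminus F$. Moreover, every vertex on $P$ lies on exactly one of these subpaths, so every SCC of $G\setminus F$ intersecting $P$ must intersect at least one $P_i$. Since for every $x$ appearing on some $P_i$ we have already prestored ${\cal G}(x)$ and ${\cal G}^R(x)$, Theorem~\ref{theorem:specific_case} applies: for each $i$, we can compute the SCCs of $G\setminus F$ intersecting $P_i$ in $O(2^k n \log n)$ time. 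Summing over $i$ gives the stated bound of $O((\ell+1) 2^k n \log n)$.

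The only mild subtlety is that a single SCC of $G\setminus F$ may intersect several of the $P_i$'s and would therefore be reported by several invocations of Theorem~\ref{theorem:specific_case}. To avoid outputting duplicates, I would process $P_1, \ldots, P_{\ell+1}$ sequentially, maintaining a boolean array over $V$ that marks every vertex which has already been placed into some reported SCC. When processing $P_i$, for each SCC candidate produced by Theorem~\ref{theorem:specific_case}, I would pick any representative vertex and check its marker: if it is already marked, the entire SCC has been reported earlier (SCCs are a partition of $V$, so a single vertex suffices to identify the whole component) and we skip it; otherwise we output it and mark all its vertices. This bookkeeping adds only $O(n)$ time per subpath and is subsumed by the $O(2^k n \log n)$ term, so the overall running time remains $O((\ell+1) 2^k n \log n)$. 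This deduplication step is the only real thing that needs care beyond a mechanical invocation of Theorem~\ref{theorem:specific_case}; the rest of the proof is essentially a direct concatenation.
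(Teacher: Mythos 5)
Your proof is correct and matches the paper's approach: the paper also obtains Theorem~\ref{theorem:split-path} by splitting $P$ at the $\ell$ failed edges into $\ell+1$ intact subpaths and invoking Theorem~\ref{theorem:specific_case} on each, with the same $O(2^k n|P|)$ data structure consisting of ${\cal G}(x)$ and ${\cal G}^R(x)$ for every $x\in P$. The deduplication detail you flag (an SCC straddling several subpaths being reported more than once) is a legitimate point the paper elides here and handles only implicitly via the marking set $W$ in the main algorithm; your marker-array fix is correct and costs nothing asymptotically.
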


Now in order to use Theorem \ref{theorem:split-path} to design a fault tolerant
algorithm for SCCs, we need to find a family of paths, say $\cal P$,
such that for any $F$, each SCC of $G\setminus F$ intersects at least one path in $\cal P$.
As described in the Subsection \ref{subsection:overview}, a heavy path decomposition of DFS tree $T$
serves as a good choice for $\cal P$.
Choosing $T$ as a DFS tree helps us because of the following reason: let $P$ be any root-to-leaf path,
and suppose we have already computed the SCCs in $G\setminus F$ intersecting $P$. Then each of the remaining SCCs must be
contained in some subtree hanging from path $P$.
The following lemma formally states this fact.

\begin{lemma}
Let $F$ be any set of failed edges, and $Path(a,b)$ be any path in $\cal P$.
Let $S$ be any SCC in $G\setminus F$ that intersects $Path(a,b)$ but does not
intersect any path that is an ancestor path of $Path(a,b)$ in $\cal P$.
Then all the vertices of $S$ must lie in the subtree $T(a)$.
\label{lemma:subtree}
\end{lemma}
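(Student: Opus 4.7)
The plan is to argue by contradiction. Suppose some vertex $v\in S$ lies outside $T(a)$, and let $u$ be any vertex of $S\cap Path(a,b)$ (which is non-empty by hypothesis). The goal is to use the structure of DFS together with the heavy path decomposition to exhibit an ancestor path of $Path(a,b)$ that also intersects $S$, contradicting the assumption on $S$.

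The first observation is that $u\in T(a)$ and $v\notin T(a)$, so $v$ cannot be a descendant of $u$. I would then split into two cases according to the tree relationship between $u$ and $v$. If $v$ is a proper ancestor of $u$ in $T$, then $a$ and $v$, being both ancestors of $u$, are comparable; and $v\notin T(a)$ forces $v$ to be a strict ancestor of $a$. Letting $Path(a',b')\in{\cal P}$ be the heavy path that contains $v$, the vertex $a'$ is an ancestor of $v$ and hence of $a$. Moreover $a'\neq a$: otherwise $v$ would lie on $Path(a,b)\subseteq T(a)$. Thus $Path(a',b')$ is an ancestor path of $Path(a,b)$ and it contains the vertex $v\in S$, giving the desired contradiction.

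The more delicate case is when $u$ and $v$ are incomparable in $T$. Here I would invoke Lemma \ref{lemma:DFS-property}: the $G\setminus F$-path that goes from the earlier-visited of $\{u,v\}$ to the later-visited one (which exists because $u,v\in S$) must pass through some common ancestor $z$ of $u$ and $v$. Concatenating this path through $z$ with the reverse $G\setminus F$-path between $u$ and $v$ inside $S$ produces a cycle through $z$ in $G\setminus F$, so $z\in S$. Next I would show $z$ is a strict ancestor of $a$: the vertices $z$ and $a$ are both ancestors of $u$ and hence comparable, and if $a$ were equal to $z$ or an ancestor of $z$ we would get $v\in T(z)\subseteq T(a)$, contradicting $v\notin T(a)$. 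With $z\in S$ a strict ancestor of $a$, the same heavy-path argument used in the previous paragraph (applied to $z$ in place of $v$) produces an ancestor path of $Path(a,b)$ intersecting $S$, the desired contradiction.

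The main obstacle is the incomparable case: two ingredients must be combined carefully — extracting a common ancestor on a $G\setminus F$-path between $u$ and $v$ via Lemma \ref{lemma:DFS-property}, and then using strong connectivity within $S$ to certify that this common ancestor actually belongs to $S$. Once these are in place, the reduction to an ancestor-path intersection through the heavy path decomposition is mechanical and identical in both cases.
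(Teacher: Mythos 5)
Your proof is correct and follows essentially the same route as the paper's: extract a common ancestor $z$ of $u$ and $v$ on a $G\setminus F$-path witnessing strong connectivity (via Lemma~\ref{lemma:DFS-property}), argue $z\in S$ using the cycle through $z$, show $z$ is a strict ancestor of $a$, and conclude via the heavy path containing $z$. Your explicit case split between comparable and incomparable $u,v$ is slightly more careful than the paper's phrasing, which invokes Lemma~\ref{lemma:DFS-property} without first verifying incomparability, but the argument is otherwise identical.
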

\begin{proof}
Consider a vertex $u$ on $Path(a,b)$ whose SCC $S_u$ in $G\setminus F$
is not completely contained in the subtree $T(a)$.
We show that $S_u$ must contain an ancestor of $a$ in $T$, thereby
proving that it intersects an ancestor-path of $Path(a,b)$ in $\cal P$.
Let $v$ be any vertex in $S_u$ that is not in the subtree $T(a)$.
Let $P_{u,v}$ and $P_{v,u}$ be paths from $u$ to $v$
and from $v$ to $u$, respectively,  in $G\setminus F$.
From Lemma \ref{lemma:DFS-property} it follows that either $P_{u,v}$ or $P_{v,u}$
must pass through a common ancestor of $u$ and $v$ in $T$. Let this ancestor be $z$.
Notice also that since $P_{u,v}$ and $P_{v,u}$ form a cycle all their vertices are in $S_u$. Therefore,
$u$ and $z$ are in the same SCC in $G\setminus F$.
Moreover, since $v\notin T(a)$ and $u\in T(a)$, their common ancestor $z$ in $T$ is an ancestor of $a$.
Since $z\in S_u$ and it is an ancestor of $a$ in $T$, the lemma follows.
\end{proof}

Lemma \ref{lemma:subtree} suggests that if we process the paths from ${\cal P}$ in
the non-decreasing order of their depths, then in order to compute the SCCs intersecting a path $Path(a,b)\in \cal P$,
it suffices to focus on the subgraph induced by the vertices in $T(a)$ only. This is because the SCCs intersecting $Path(a,b)$
that do not completely lie in $T(a)$ would have already been computed during the processing of some ancestor
path of $Path(a,b)$.

We preprocess the graph $G$ as follows. We first compute a heavy path decomposition $\cal P$ of DFS tree $T$.
Next for each path $Path(a,b)\in \cal P$, we use Theorem \ref{theorem:split-path} to construct the data structure
for path $Path(a,b)$ and the subgraph of $G$ induced by vertices in $T(a)$. We use the notation
${\cal D}_{a,b}$ to denote this data structure.
Our algorithm for reporting SCCs in $G\setminus F$ will use the collection of these data structures associated with
the paths in $\cal P$ as follows.

Let $\cal C$ denote the collection of SCCs in $G\setminus F$ initialized to $\emptyset$.
We process the paths from ${\cal P}$ in non-decreasing order of their depths. Let $P(a,b)$ be any path
in ${\cal P}$ and let $A$ be the set of vertices belonging to $T(a)$.
We use the data structure ${\cal D}_{a,b}$ to compute SCCs of $G(A)\setminus F$ intersecting $P(a,b)$.
Let these be $S_1,\ldots,S_t$. Note that some of these SCCs might be a part of some
bigger SCC computed earlier. We can detect it by keeping a set $W$ of all vertices for which we have computed
their SCCs. So if $S_i\subseteq W$, then we can discard $S_i$, else we add $S_i$ to collection $\cal C$.
Algorithm \ref{main-algo} gives the complete pseudocode of this algorithm.

\begin{algorithm}
$\cal C \gets \emptyset$\tcc*[r]{Collection of SCCs}
$W \gets \emptyset$\tcc*[r]{A subset of $V$ whose SCC have been computed}
${\cal P}\gets$ A heavy-path decomposition of $T$, where paths are sorted in the non-decreasing order of their depths\;
\ForEach{$Path(a,b)\in {\cal P}$}{
  $A\gets$ Vertices lying in the subtree $T(a)$\;
  $(S_1,\ldots,S_t) \gets$ SCCs intersecting $Path(a,b)$ in $G(A)\setminus F$ computed using ${\cal D}_{a,b}$\;
  \ForEach{$i\in [1,t]$}{
    \lIf{$(S_i\nsubseteq W)$}{Add $S_i$ to collection $\cal C$ and set $W=W\cup S_i$}
  }
}
Return $~\cal C$\;
\caption{Compute SCC($G,F$)}
\label{main-algo}
\end{algorithm}

Note that, in the above explanation, we only used the fact that $T$ is a DFS tree, and $\cal P$ could have been any path decomposition of $T$.
We now show how the fact that $\cal P$ is a heavy-path decomposition is crucial for the efficiency of our algorithm.
Consider any vertex $v\in T$. The number of times $v$ is processed in Algorithm \ref{main-algo} is equal to the number of
paths in $\cal P$ that start from either $v$ or an ancestor of $v$. For this number to be small for each $v$, we choose
$\cal P$ to be a heavy path decomposition of $T$.
On applying Theorem \ref{theorem:split-path}, this immediately
gives that the total time taken by Algorithm \ref{main-algo} is $O(k2^kn\log^2n)$.
In the next subsection, we do a more careful analysis and show that this bound can be improved to $O(2^kn\log^2n)$.


\subsection{Analysis of time complexity of Algorithm \ref{main-algo}}
\label{section:improved-time-complexity}

For any path $Path(a,b)\in \cal P$ and any set $F$ of failing edges, let $\ell(a,b)$ denote the number
of edges of $F$ that lie on $Path(a,b)$. It follows from Theorem \ref{theorem:split-path} that
the time spent in processing $Path(a,b)$ by Algorithm \ref{main-algo} is
$O\big((\ell(a,b)+1)\times 2^k|T(a)| \times \log n\big)$. 
Hence the time complexity of Algorithm \ref{main-algo} is of the order of 

$$\sum_{Path(a,b)\in{\cal P}} (\ell(a,b)+1)\times 2^k|T(a)| \times \log n $$

In order to calculate this we define a notation $\alpha(v,Path(a,b))$ as $\ell(a,b)+1$ if  $v\in T(a)$,
and $0$ otherwise, for each $v\in V$ and $Path(a,b)\in \cal P$.
So the time complexity of Algorithm \ref{main-algo} becomes

$$~2^k \log n \times \Big(\sum_{Path(a,b)\in{\cal P}} (\ell(a,b)+1)\times |T(a)|\Big)$$
$$=2^k \log n \times \Big(\sum_{Path(a,b)\in{\cal P}}~ \sum_{v\in V} \alpha(v,Path(a,b)) \Big)$$
$$=2^k \log n \times \Big(\sum_{v\in V} ~\sum_{Path(a,b)\in{\cal P}} \alpha(v,Path(a,b)) \Big)$$

Observe that for any vertex $v$ and $Path(a,b)\in \cal P$, $\alpha(v,Path(a,b))$ is equal to 
$\ell(a,b)+1$ if $a$ is either $v$ or an ancestor of $v$, otherwise it is zero.
Consider any vertex $v\in V$.
We now show that $\sum_{Path(a,b)\in{\cal P}} \alpha(v,Path(a,b))$ is at most $k+\log n$.
Let $P_v$ denote the set of those paths in $\cal P$ which starts from either $v$ or an ancestor of $v$.
Then  $\sum_{Path(a,b)\in{\cal P}} \alpha(v,Path(a,b)) =  \sum_{Path(a,b)\in{P_v}} \ell(a,b)+1$.
Note that $\sum_{Path(a,b)\in{P_v}} \ell(a,b)$ is at most $k$, and Lemma \ref{lemma:heavy-path} implies that the 
number of paths in $P_v$ is at most $\log n$. 
This shows that $\sum_{Path(a,b)\in{\cal P}} \alpha(v,Path(a,b))$ is at most $k+\log n$
which is $O(\log n)$, since $k\leq \log n$.

Hence the time complexity of Algorithm \ref{main-algo} becomes $O(2^k n\log^2 n)$.
We thus conclude with the following theorem.

\begin{theorem}
For any $n$-vertex directed graph $G$, there exists an $O(2^k n^2)$ size
data structure that, given any set $F$ of at most $k$ failing edges, can report 
all the SCCs of $G\setminus F$ in $O(2^k n\log^2 n)$ time.
\label{theorem:main-theorem}
\end{theorem}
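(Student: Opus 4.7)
The plan is to combine three pieces. The $O(2^k n \log^2 n)$ running time bound was already established by the analysis in Section~\ref{section:improved-time-complexity} immediately preceding the theorem; two claims remain: the correctness of Algorithm~\ref{main-algo}, and an $O(2^k n^2)$ bound on the total size of the preprocessed data structure.

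For correctness, I would exploit the fact that Algorithm~\ref{main-algo} processes paths of ${\cal P}$ in non-decreasing depth order, and verify that every SCC of $G\setminus F$ is output exactly once. Since the paths of ${\cal P}$ partition $V$, every SCC $S$ of $G\setminus F$ intersects at least one path in ${\cal P}$; let $Path(a,b)$ be the shallowest such one. Lemma~\ref{lemma:subtree} then forces $S \subseteq T(a)$. The key intermediate claim is that $S$ remains an SCC of the induced subgraph $G(T(a))\setminus F$: if some $u,v\in S$ were connected in $G\setminus F$ only via a vertex $w\notin T(a)$, then $w$ would be mutually reachable with $u$ in $G\setminus F$, forcing $w\in S$ and contradicting $S\subseteq T(a)$. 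Consequently $S$ is returned by ${\cal D}_{a,b}$ among the SCCs of $G(T(a))\setminus F$ intersecting $Path(a,b)$. Since $S$ intersects no strictly shallower path in ${\cal P}$, no vertex of $S$ has been placed in $W$ yet, so the test $S\not\subseteq W$ succeeds and $S$ is added to ${\cal C}$. Conversely, any $S_i$ added with $S_i\not\subseteq W$ must be a genuine SCC of $G\setminus F$: the SCC $S^*$ of $G\setminus F$ containing $S_i$ either lies entirely in $T(a)$ (in which case the closure argument above forces $S^*$ to be an SCC of $G(T(a))\setminus F$, and then $S^*=S_i$ by maximality of $S_i$), or has a vertex outside $T(a)$ (in which case Lemma~\ref{lemma:subtree} implies $S^*$ intersects an ancestor path of $Path(a,b)$, so $S^*$ was added to ${\cal C}$ and all its vertices placed in $W$ at an earlier iteration, contradicting $S_i\not\subseteq W$).

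For the size bound, Theorem~\ref{theorem:split-path} applied to $G(T(a))$ and $Path(a,b)$ yields a data structure of size $O(2^k |T(a)| \cdot |Path(a,b)|)$. Because the paths of ${\cal P}$ are vertex-disjoint, every $x\in V$ lies on exactly one path; letting $a_x$ denote the top of this path, the aggregate size telescopes to $O\!\bigl(2^k \sum_{x\in V} |T(a_x)|\bigr)$, which is at most $O(2^k n^2)$ since each $|T(a_x)|\le n$.

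The main obstacle is the closure step in the correctness argument -- verifying that an SCC of $G\setminus F$ confined to $T(a)$ remains an SCC once we restrict attention to the induced subgraph $G(T(a))\setminus F$. Without this one cannot identify the output of ${\cal D}_{a,b}$ with the true SCCs of $G\setminus F$ intersecting $Path(a,b)$, and both the completeness and soundness parts of the proof would break. Everything else is routine accounting on top of the machinery already assembled in Lemma~\ref{lemma:subtree} and Section~\ref{section:improved-time-complexity}.
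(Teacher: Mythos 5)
Your proposal is correct and follows essentially the same route as the paper: Algorithm~\ref{main-algo} with heavy-path decomposition, Lemma~\ref{lemma:subtree} to localize unprocessed SCCs into subtrees, Theorem~\ref{theorem:split-path} for each path, and the aggregation over $\cal P$ from Section~\ref{section:improved-time-complexity}. The one piece you flag as the ``main obstacle''---that an SCC $S$ of $G\setminus F$ with $S\subseteq T(a)$ remains an SCC of $G(T(a))\setminus F$---is indeed a gap the paper leaves implicit, and your cycle-containment argument closes it correctly (and is in fact the same reasoning already used inside the proof of Lemma~\ref{lemma:subtree}); your telescoping size calculation $\sum_{Path(a,b)\in\cal P}|T(a)|\cdot|Path(a,b)|=\sum_{x\in V}|T(a_x)|\le n^2$ likewise matches what Theorem~\ref{theorem:split-path} implicitly requires.
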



\section{Extension to handle insertion as well as deletion of edges}
\label{section:updates}

In this section we extend our algorithm to incorporate insertion as well as deletion of edges. 
That is,  we describe an algorithm for reporting SCCs of a directed graph $G$ when there are 
at most $k$ edge insertions and at most $k$ edge deletions.

Let $\cal D$ denote the $O(2^k n^2)$ size data structure, described in Section \ref{section:main-algo},
for handling $k$ failures. In addition to $\cal D$, we store the two $\kftrs$: ${\cal G}^{}(v)$ and 
${\cal G}^{R}(v)$ for each vertex $v$ in $G$. Thus the space used remains the same, i.e. $O(2^k n^2)$.
Now let $U=(X,Y)$ be the ordered pair of $k$ updates, with $X$ being the set of failing edges and $Y$ being 
the set of newly inserted edges. Also let $|X|\le k$ and $|Y|\le k$.

\begin{algorithm}[!ht]
$\cal C \gets$ SCCs of graph $G\setminus X$ computed using data structure $\cal D$\;
$S \gets$ Subset of $V$ consisting of endpoints of edges in $Y$\;
$H \gets \bigcup_{v\in S} \big({\cal G}^{}(v) + {\cal G}^{R}(v) + Y\big)$\;
Compute SCCs of graph $H\setminus X$ using any standard static algorithm\;
\ForEach{$v\in S$}
{
Merge all the smaller SCCs of $\cal C$ which are contained in $SCC_{H\setminus X}(v)$ into a single SCC\;
}
\caption{Find-SCCs($U=(X,Y)$)}
\label{algo:k-updates}
\end{algorithm}

Our first step is to compute the collection $\cal C$, consisting of SCCs of graph $G\setminus X$.
This can be easily done in $O(2^k n \log^2 n)$ time using the data structure $\cal D$.
Now on addition of set $Y$, some of the SCCs in $\cal C$ may get merged into bigger SCCs.
Let $S$ be the subset of $V$ consisting of endpoints of edges in $Y$.
Note that if the SCC of a vertex gets altered on addition of $Y$, then its new SCC 
must contain at least one edge from $Y$, and thus also a vertex from set $S$.
Therefore, in order to compute SCCs of $G+U$, it suffices to recompute only the SCCs of
vertices lying in the set $S$.

\begin{lemma}
Let $H$ be a graph consisting of edge set $Y$, and the $\kftrs$ ${\cal G}^{}(v)$
and ${\cal G}^{R}(v)$, for each $v\in S$. Then $SCC_{H\setminus X}(v)=SCC_{G+U}(v)$, for each $v\in S$.
\label{lemma:k-updates}
\end{lemma}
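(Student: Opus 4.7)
The plan is to prove the two set inclusions $SCC_{H\setminus X}(v)\subseteq SCC_{G+U}(v)$ and $SCC_{G+U}(v)\subseteq SCC_{H\setminus X}(v)$. The first direction is essentially immediate from the construction of $H$: each $\mathcal{G}(w)$ and each $\mathcal{G}^R(w)$ (reoriented back to $G$'s edge directions) is a subgraph of $G$, and $Y$ is explicitly included, so $H\subseteq G\cup Y$, whence $H\setminus X\subseteq (G\setminus X)\cup Y = G+U$. Any cycle through $v$ in $H\setminus X$ therefore persists in $G+U$, giving the easy inclusion.

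For the reverse inclusion, I would fix $u\in SCC_{G+U}(v)$ and lift a pair of paths witnessing the strong connectivity into $H\setminus X$ by decomposing them at their occurrences of $Y$-edges. Concretely, take any path $P$ from $v$ to $u$ in $G+U$ and list the $Y$-edges on $P$ in order as $e_1=(a_1,b_1),\dots,e_r=(a_r,b_r)$. This splits $P$ into segments $P_0,P_1,\dots,P_r$, where $P_0$ runs from $v$ to $a_1$, $P_i$ runs from $b_i$ to $a_{i+1}$ for $1\le i\le r-1$, and $P_r$ runs from $b_r$ to $u$, each $P_i$ lying entirely in $G\setminus X$. The key observation is that the \emph{source} of every such segment, namely $v,b_1,\dots,b_r$, belongs to $S$: $v\in S$ by hypothesis, and each $b_i$ is an endpoint of the $Y$-edge $e_i$. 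By Definition \ref{definition:k-FTRS}, since $\mathcal{G}(w)$ is a $\kftrs$ from source $w$ and $|X|\le k$, the endpoint of each $P_i$ remains reachable from its source inside $\mathcal{G}(w)\setminus X\subseteq H\setminus X$. Stitching these reachability paths together with the $Y$-edges $e_1,\dots,e_r\in Y\subseteq H$ produces a path from $v$ to $u$ inside $H\setminus X$.

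A fully symmetric argument applied to any path from $u$ to $v$ in $G+U$ handles the other direction: after splitting at $Y$-edges, every segment now \emph{ends} at a vertex in $S$ (each $a_i$, or $v$ itself), so the $\kftrs$ property of the appropriate $\mathcal{G}^R(w)\subseteq H$ yields the needed reachability in $H\setminus X$. Concatenating all these segments with the $Y$-edges gives a path from $u$ to $v$ in $H\setminus X$, and together with the previous path this witnesses $u\in SCC_{H\setminus X}(v)$. I do not anticipate a serious obstacle; the only point requiring care is the bookkeeping that decides whether to use $\mathcal{G}$ or $\mathcal{G}^R$ for each segment, and the argument works because the $Y$-edge decomposition is precisely the mechanism that forces every segment's source (in the forward direction) or sink (in the backward direction) to land in $S$, which is exactly where $H$ has been designed to cache $\kftrs$ information.
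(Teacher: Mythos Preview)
Your proposal is correct and follows essentially the same approach as the paper: the easy inclusion via $H\setminus X\subseteq G+U$, then for the reverse inclusion decomposing a $v\!\to\!u$ (resp.\ $u\!\to\!v$) path in $G+U$ at its $Y$-edges so that every segment lies in $G\setminus X$ and starts (resp.\ ends) at a vertex of $S$, whence the appropriate $\kftrs$ ${\cal G}(\cdot)$ (resp.\ ${\cal G}^{R}(\cdot)$) contained in $H$ supplies a replacement segment in $H\setminus X$. Your write-up is slightly more explicit than the paper's about invoking ${\cal G}^{R}$ for the $u\!\to\!v$ direction and about reorienting ${\cal G}^{R}(w)$ back to $G$'s edge directions when forming $H$, but the argument is the same.
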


\begin{proof}
Consider a vertex $v\in S$. Since $H\setminus X\subseteq G+U$, $SCC_{H\setminus X}(v)\subseteq SCC_{G+U}(v)$.
We show that  $SCC_{H\setminus X}(v)$ is indeed equal to $SCC_{G+U}(v)$.

Let $w$ be any vertex reachable from $v$ in $G+U$, by a path, say $P$.
Our aim is to show that $w$ is reachable from $v$ in $H\setminus X$ as well.
Notice that we can write $P$ as $(P_1\concat e_1 \concat P_2 \concat e_2 \cdots e_{\ell-1} \concat P_\ell)$,
where $e_1,\ldots,e_{\ell-1}$ are edges in $Y\cap P$ and $P_1,\ldots,P_\ell$ are segments of $P$
obtained after removal of edges of set $Y$. Thus $P_1,\ldots,P_\ell$ lie in $G\setminus X$.
For $i=1$ to $\ell$, let $a_i$ and $b_i$ be respectively the first and last vertices of path $P_i$.
Since $a_1=v$ and $a_2,\ldots,a_\ell\in S$, the $\kftrs$ of all the vertices $a_1$ to $a_\ell$ is contained in $H$. 
Thus for $i=1$ to $\ell$, vertex $b_i$ must be reachable from $a_i$ by some path, say $Q_i$, in graph $H\setminus X$.
Hence $Q = (Q_1\concat e_1 \concat Q_2 \cdots e_{\ell-1} \concat Q_\ell)$ is a path from $a_1=v$ to $b_\ell=w$
in graph $H\setminus X$. 

In a similar manner we can show that if a vertex $w'$ has a path to $v$ in  graph $G+U$, then 
$w'$ will also have path to $v$ in graph $H\setminus X$. Thus $SCC_{H\setminus X}(v)$ must be equal to $SCC_{G+U}(v)$.
\end{proof}

So we compute the auxiliary graph $H$ as described in Lemma \ref{lemma:k-updates}.
Note that $H$ contains only $O(k 2^k n)$ edges. 
Next we compute the SCCs of graph $H\setminus X$ using any standard algorithm \cite{Cormen} that runs in time
which is linear in terms of the number of edges and vertices. 
This algorithm will take $O(2^k n \log n)$ time, since $k$ is at most $\log n$.
Finally, for each $v\in S$, we check if the $SCC_{H\setminus X}(v)$ has broken into smaller SCCs in $\cal C$,
if so, then we merge all of them into a single SCC. We can accomplish this entire task in a total $O(nk)$ time only. 
This completes the description of our algorithm. For the pseudocode see Algorithm \ref{algo:k-updates}.

We conclude with the following theorem.

\begin{theorem}
For any $n$-vertex directed graph $G$, there exists an $O(2^k n^2)$ size data structure that, 
given any set $U$ of at most $k$ edge insertions and at most $k$ edge deletions,
can report the SCCs of graph $G+U$ in $O(2^k n\log^2 n)$ time.
\label{theorem:k-updates}
\end{theorem}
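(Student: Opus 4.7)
My plan is to extend the deletion-only data structure of Theorem \ref{theorem:main-theorem} by augmenting it with the $\kftrs$ structures $\mathcal{G}(v)$ and $\mathcal{G}^R(v)$ for every vertex $v \in V$. Since each $\kftrs$ has $O(2^k n)$ edges and we store $2n$ of them, the total additional space is $O(2^k n^2)$, matching the claimed bound. Given an update set $U = (X, Y)$ where $X$ is the set of at most $k$ deleted edges and $Y$ the set of at most $k$ inserted edges, the first step is to invoke Theorem \ref{theorem:main-theorem} on $G \setminus X$ to obtain in $O(2^k n \log^2 n)$ time a collection $\mathcal{C}$ of the SCCs of $G \setminus X$.

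The next observation is that inserting the edges of $Y$ can only \emph{merge} existing SCCs of $G \setminus X$, never split them. Moreover, any SCC of $G + U$ that differs from an SCC of $G \setminus X$ must contain the endpoint of some inserted edge, since cycles created by the update must traverse edges of $Y$. Thus if we let $S$ be the set of endpoints of edges in $Y$ (so $|S| \le 2k$), it suffices to recompute the SCC of every $v \in S$ in $G + U$, and then merge the SCCs of $\mathcal{C}$ accordingly. To do this efficiently, I would form the auxiliary graph $H = \bigcup_{v \in S}\bigl(\mathcal{G}(v) \cup \mathcal{G}^R(v)\bigr) \cup Y$, which contains only $O(k \cdot 2^k n)$ edges, and run any linear-time static SCC algorithm on $H \setminus X$.

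The crux of the argument is the claim that $SCC_{H \setminus X}(v) = SCC_{G+U}(v)$ for every $v \in S$, which I expect to be the main technical obstacle. One inclusion is immediate because $H \setminus X \subseteq G + U$. For the other direction, given a path $P$ from $v$ to some $w$ in $G+U$, I would decompose $P$ at its inserted edges as $P = P_1 \mathbin{::} e_1 \mathbin{::} P_2 \mathbin{::} \cdots \mathbin{::} e_{\ell-1} \mathbin{::} P_\ell$, so that each $P_i$ is a path in $G \setminus X$ from some vertex $a_i$ to some vertex $b_i$. By construction $a_1 = v \in S$, and for $i \ge 2$ the vertex $a_i$ is the head of inserted edge $e_{i-1}$, hence also in $S$. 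Applying the defining property of $\mathcal{G}(a_i)$ (a $\kftrs$ of $G$ with source $a_i$) with fault set $X$, each segment $P_i$ can be replaced by a path $Q_i$ from $a_i$ to $b_i$ lying in $\mathcal{G}(a_i) \setminus X \subseteq H \setminus X$. Concatenating these with the inserted edges yields a $v$-to-$w$ path in $H \setminus X$. The reverse-reachability direction is symmetric using $\mathcal{G}^R(a_i)$.

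For the time analysis: the initial SCC computation on $G \setminus X$ takes $O(2^k n \log^2 n)$ by Theorem \ref{theorem:main-theorem}; running a static linear-time SCC algorithm on $H \setminus X$ takes $O(k \cdot 2^k n) = O(2^k n \log n)$ since we may assume $k \le \log n$; and the final merging step, which for each of the $|S| \le 2k$ vertices $v \in S$ walks through $SCC_{H\setminus X}(v)$ and unions the sub-SCCs of $\mathcal{C}$ contained in it, can be implemented in $O(nk)$ total time using a simple marking scheme. Summing, the overall running time is $O(2^k n \log^2 n)$, matching the claim of the theorem.
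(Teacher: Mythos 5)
Your proposal is correct and follows essentially the same approach as the paper: the same augmented data structure, the same auxiliary graph $H$ built from the $\kftrs$ structures of the endpoints of inserted edges, the same key lemma $SCC_{H\setminus X}(v)=SCC_{G+U}(v)$ proved by decomposing a path in $G+U$ at the inserted edges and replacing each segment via the $\kftrs$ property, and the same time analysis. No substantive differences.
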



\bibliography{ref}

\begin{thebibliography}{10}

\bibitem{AbboudW14}
Amir Abboud and Virginia~Vassilevska Williams.
\newblock Popular conjectures imply strong lower bounds for dynamic problems.
\newblock In {\em 55th {IEEE} Annual Symposium on Foundations of Computer
  Science, {FOCS} 2014, Philadelphia, PA, USA, October 18-21, 2014}, pages
  434--443, 2014.

\bibitem{OUR-STOC}
Surender Baswana, Keerti Choudhary, and Liam Roditty.
\newblock Fault tolerant subgraph for single source reachability: generic and
  optimal.
\newblock In {\em Proceedings of the 48th Annual {ACM} {SIGACT} Symposium on
  Theory of Computing, {STOC} 2016, Cambridge, MA, USA, June 18-21, 2016},
  pages 509--518, 2016.

\bibitem{BK:9}
Aaron Bernstein and David Karger.
\newblock A nearly optimal oracle for avoiding failed vertices and edges.
\newblock In {\em STOC'09: Proceedings of the 41st annual ACM symposium on
  Theory of computing}, pages 101--110, New York, NY, USA, 2009. ACM.

\bibitem{Bilo16}
Davide Bil{\`{o}}, Luciano Gual{\`{a}}, Stefano Leucci, and Guido Proietti.
\newblock Multiple-edge-fault-tolerant approximate shortest-path trees.
\newblock In {\em 33rd Symposium on Theoretical Aspects of Computer Science,
  {STACS} 2016, February 17-20, 2016, Orl{\'{e}}ans, France}, pages
  18:1--18:14, 2016.

\bibitem{C:13}
Shiri Chechik.
\newblock Fault-tolerant compact routing schemes for general graphs.
\newblock {\em Inf. Comput.}, 222:36--44, 2013.

\bibitem{ChechikCFK17}
Shiri Chechik, Sarel Cohen, Amos Fiat, and Haim Kaplan.
\newblock (1 + epsilon)-approximate \emph{f}-sensitive distance oracles.
\newblock In {\em Proceedings of the Twenty-Eighth Annual {ACM-SIAM} Symposium
  on Discrete Algorithms, {SODA} 2017, Barcelona, Spain, Hotel Porta Fira,
  January 16-19}, pages 1479--1496, 2017.

\bibitem{ChETAL}
Shiri Chechik, Thomas~Dueholm Hansen, Giuseppe~F. Italiano, Jakub Lacki, and
  Nikos Parotsidis.
\newblock Decremental single-source reachability and strongly connected
  components in {\~{o}}(m{\(\surd\)}n) total update time.
\newblock In {\em {IEEE} 57th Annual Symposium on Foundations of Computer
  Science, {FOCS} 2016, 9-11 October 2016, Hyatt Regency, New Brunswick, New
  Jersey, {USA}}, pages 315--324, 2016.

\bibitem{CR:12}
Shiri Chechik, Michael Langberg, David Peleg, and Liam Roditty.
\newblock f-sensitivity distance oracles and routing schemes.
\newblock {\em Algorithmica}, 63(4):861--882, 2012.

\bibitem{Cormen}
Thomas~H. Cormen, Charles~E. Leiserson, Ronald~L. Rivest, and Clifford Stein.
\newblock {\em Introduction to Algorithms {(3.} ed.)}.
\newblock {MIT} Press, 2009.

\bibitem{DT:08}
Camil Demetrescu, Mikkel Thorup, Rezaul~Alam Chowdhury, and Vijaya
  Ramachandran.
\newblock Oracles for distances avoiding a failed node or link.
\newblock {\em SIAM J. Comput.}, 37(5):1299--1318, 2008.

\bibitem{DK:11}
Michael Dinitz and Robert Krauthgamer.
\newblock Fault-tolerant spanners: better and simpler.
\newblock In {\em Proceedings of the 30th Annual {ACM} Symposium on Principles
  of Distributed Computing, {PODC} 2011, San Jose, CA, USA, June 6-8, 2011},
  pages 169--178, 2011.

\bibitem{DP:09}
Ran Duan and Seth Pettie.
\newblock Dual-failure distance and connectivity oracles.
\newblock In {\em SODA'09: Proceedings of 19th Annual ACM -SIAM Symposium on
  Discrete Algorithms}, pages 506--515, Philadelphia, PA, USA, 2009. Society
  for Industrial and Applied Mathematics.

\bibitem{DuPe10}
Ran Duan and Seth Pettie.
\newblock Connectivity oracles for failure prone graphs.
\newblock In {\em Proceedings of the 42nd {ACM} Symposium on Theory of
  Computing, {STOC} 2010, Cambridge, Massachusetts, USA, 5-8 June 2010}, pages
  465--474, 2010.

\bibitem{DuPe17}
Ran Duan and Seth Pettie.
\newblock Connectivity oracles for graphs subject to vertex failures.
\newblock In {\em Proceedings of the Twenty-Eighth Annual {ACM-SIAM} Symposium
  on Discrete Algorithms, {SODA} 2017, Barcelona, Spain, Hotel Porta Fira,
  January 16-19}, pages 490--509, 2017.

\bibitem{FMNZ01}
Daniele Frigioni, Tobias Miller, Umberto Nanni, and Christos~D. Zaroliagis.
\newblock An experimental study of dynamic algorithms for transitive closure.
\newblock {\em {ACM} Journal of Experimental Algorithmics}, 6:9, 2001.

\bibitem{GeItPa17}
Loukas Georgiadis, Giuseppe~F. Italiano, and Nikos Parotsidis.
\newblock Strong connectivity in directed graphs under failures, with
  applications.
\newblock In {\em Proceedings of the Twenty-Eighth Annual {ACM-SIAM} Symposium
  on Discrete Algorithms, {SODA} 2017, Barcelona, Spain, Hotel Porta Fira,
  January 16-19}, pages 1880--1899, 2017.

\bibitem{GW:12}
Fabrizio Grandoni and Virginia~Vassilevska Williams.
\newblock Improved distance sensitivity oracles via fast single-source
  replacement paths.
\newblock In {\em 53rd Annual {IEEE} Symposium on Foundations of Computer
  Science, {FOCS} 2012, New Brunswick, NJ, USA, October 20-23, 2012}, pages
  748--757, 2012.

\bibitem{HenzingerKN14}
Monika Henzinger, Sebastian Krinninger, and Danupon Nanongkai.
\newblock Sublinear-time decremental algorithms for single-source reachability
  and shortest paths on directed graphs.
\newblock In {\em Symposium on Theory of Computing, {STOC} 2014, New York, NY,
  USA, May 31 - June 03, 2014}, pages 674--683, 2014.

\bibitem{HenzingerKNS15}
Monika Henzinger, Sebastian Krinninger, Danupon Nanongkai, and Thatchaphol
  Saranurak.
\newblock Unifying and strengthening hardness for dynamic problems via the
  online matrix-vector multiplication conjecture.
\newblock In {\em Proceedings of the Forty-Seventh Annual {ACM} on Symposium on
  Theory of Computing, {STOC} 2015, Portland, OR, USA, June 14-17, 2015}, pages
  21--30, 2015.

\bibitem{Italiano88}
Giuseppe~F. Italiano.
\newblock Finding paths and deleting edges in directed acyclic graphs.
\newblock {\em Inf. Process. Lett.}, 28(1):5--11, 1988.

\bibitem{KapronKM13}
Bruce~M. Kapron, Valerie King, and Ben Mountjoy.
\newblock Dynamic graph connectivity in polylogarithmic worst case time.
\newblock In {\em Proceedings of the Twenty-Fourth Annual {ACM-SIAM} Symposium
  on Discrete Algorithms, {SODA} 2013, New Orleans, Louisiana, USA, January
  6-8, 2013}, pages 1131--1142, 2013.

\bibitem{BaswanaK10-stacs}
Neelesh Khanna and Surender Baswana.
\newblock Approximate shortest paths avoiding a failed vertex: Optimal size
  data structures for unweighted graphs.
\newblock In {\em 27th International Symposium on Theoretical Aspects of
  Computer Science, {STACS} 2010, March 4-6, 2010, Nancy, France}, pages
  513--524, 2010.

\bibitem{Lacki11}
Jakub Lacki.
\newblock Improved deterministic algorithms for decremental transitive closure
  and strongly connected components.
\newblock In {\em Proceedings of the Twenty-Second Annual {ACM-SIAM} Symposium
  on Discrete Algorithms, {SODA} 2011, San Francisco, California, USA, January
  23-25, 2011}, pages 1438--1445, 2011.

\bibitem{Parter15}
Merav Parter.
\newblock Dual failure resilient {BFS} structure.
\newblock In {\em Proceedings of the 2015 {ACM} Symposium on Principles of
  Distributed Computing, {PODC} 2015, Donostia-San Sebasti{\'{a}}n, Spain, July
  21 - 23, 2015}, pages 481--490, 2015.

\bibitem{ParterP:13}
Merav Parter and David Peleg.
\newblock Sparse fault-tolerant {BFS} trees.
\newblock In {\em Algorithms - {ESA} 2013 - 21st Annual European Symposium,
  Sophia Antipolis, France, September 2-4, 2013. Proceedings}, pages 779--790,
  2013.

\bibitem{PaTh07}
Mihai Patrascu and Mikkel Thorup.
\newblock Planning for fast connectivity updates.
\newblock In {\em 48th Annual {IEEE} Symposium on Foundations of Computer
  Science {(FOCS} 2007), October 20-23, 2007, Providence, RI, USA,
  Proceedings}, pages 263--271, 2007.

\bibitem{RoZw08}
L.~Roditty and U.~Zwick.
\newblock Improved dynamic reachability algorithms for directed graphs.
\newblock {\em SIAM J. Comput.}, 37(5):1455--1471, 2008.

\bibitem{Roditty13}
Liam Roditty.
\newblock Decremental maintenance of strongly connected components.
\newblock In {\em Proceedings of the Twenty-Fourth Annual {ACM-SIAM} Symposium
  on Discrete Algorithms, {SODA} 2013, New Orleans, Louisiana, USA, January
  6-8, 2013}, pages 1143--1150, 2013.

\bibitem{ST:83}
Daniel~D. Sleator and Robert~E. Tarjan.
\newblock A data structure for dynamic trees.
\newblock {\em Journal of Computer and System Sciences}, 26:362--391, 1983.

\bibitem{Tarjan72}
Robert~Endre Tarjan.
\newblock Depth-first search and linear graph algorithms.
\newblock {\em {SIAM} J. Comput.}, 1(2):146--160, 1972.

\bibitem{WY:13}
Oren Weimann and Raphael Yuster.
\newblock Replacement paths and distance sensitivity oracles via fast matrix
  multiplication.
\newblock {\em {ACM} Transactions on Algorithms}, 9(2):14, 2013.

\end{thebibliography}

\end{document}